\newcommand{\eqspace}[0]{\mathbin{\phantom{=}}}
\newcommand{\eqdef}{\ensuremath{\stackrel{\mathrm{def}}{=}}}
\newcommand{\argmin}{\operatornamewithlimits{argmin}}
\newcommand{\supp}{\mathsf{supp}}
\newcommand{\diam}{\mathsf{diam}}
\newcommand{\Dists}{\mathbb{D}} % distributions
\newcommand{\Prob}{\mathrm{Pr}}
\newcommand{\lift}[1]{{#1}^{\#}}
\newcommand{\liftp}[1]{{#1}_{\Wp}^{\#}}
\newcommand{\liftemd}[1]{{#1}_{\EMD}^{\#}}
\newcommand{\nats}{\mathbb{N}}
\newcommand{\reals}{\mathbb{R}}
\newcommand{\realsnng}{\mathbb{R}^{\ge0}}
\newcommand{\realsone}{\mathbb{R}^{\ge1}}
\newcommand{\calf}{\mathcal{F}}
\newcommand{\cals}{\mathcal{S}}
\newcommand{\calx}{\mathcal{X}}
\newcommand{\caly}{\mathcal{Y}}
\newcommand{\calz}{\mathcal{Z}}
\newcommand{\DP}[0]{\textsf{DP}}
\newcommand{\XDP}[0]{\textsf{XDP}}
\newcommand{\KLP}[0]{\textsf{KLP}}
\newcommand{\DistP}[0]{\textsf{DistP}}
\newcommand{\XDistP}[0]{\textsf{XDistP}}
\newcommand{\utmetric}[0]{\mathit{d}}
\newcommand{\sensfunc}[0]{\mathit{W}}
\newcommand{\sensinf}[0]{\sensfunc_{\infty,\utmetric}}
\newcommand{\sensemd}[0]{\sensfunc_{1,\utmetric}}
\newcommand{\Div}[1]{\mathsf{Div}(#1)}
\newcommand{\diverge}[2]{\mathit{D}(#1 \parallel #2)}
\newcommand{\DKL}[0]{\mathit{D}_{\rm KL}}
\newcommand{\KLdiverge}[2]{\DKL(#1 \parallel #2)}
\newcommand{\DTV}[0]{\mathit{D}_{\rm TV}}
\newcommand{\Dinf}[0]{\mathit{D}_{\infty}}
\newcommand{\maxdiverge}[2]{\Dinf(#1 \parallel #2)}
\newcommand{\appmaxdiverge}[2]{\Dinf^{\delta}(#1 \parallel #2)}
\newcommand{\renyidiverge}[3]{\mathit{D}_{#3}(#1 \parallel #2)}
\newcommand{\Df}[0]{\mathit{D}_{f}}
\newcommand{\fdiverge}[2]{\renyidiverge{#1}{#2}{f}}
\newcommand{\Wp}{\mathit{W}_{p}}
\newcommand{\EMD}{\mathit{W}_{1}}
\newcommand{\EMDu}{\mathit{W}_{1,\utmetric}}
\newcommand{\Wpu}{\mathit{W}_{p,\utmetric}}
\newcommand{\Winfu}{\mathit{W}_{\infty,\utmetric}}
\newcommand{\cp}[2]{\mathsf{cp}(#1, #2)}
\newcommand{\GammaP}[0]{\mathit{\Gamma_{\!p,\utmetric}}}
\newcommand{\GammaInf}[0]{\mathit{\Gamma_{\!{\rm \infty,\utmetric}}}}
\newcommand{\GammaEMD}[0]{\mathit{\Gamma_{\!{\rm 1,\utmetric}}}}
\newcommand{\alg}{\mathit{A}}
\newcommand{\CP}{\mathit{C}}
\newcommand{\Seq}{\mathbin{\odot}}
\newcommand{\liftSeq}{\mathbin{\bullet}}
\newcommand{\hlambda}{\widehat{\lambda}}
\newcommand{\male}{{\it male}}
\newcommand{\female}{{\it female}}
\newtheorem{theorem}{Theorem}
\newtheorem{lemma}[theorem]{Lemma}
\newtheorem{definition}{Definition}
\newtheorem{example}{Example}
\newif\ifcommentson\commentsontrue
\newif\ifconferenceon\conferenceonfalse
\newcommand{\arxiv}[1]{}
\newcommand{\conference}[1]{#1}
\newcommand{\conferenceShort}[1]{}
\newcommand{\journal}[1]{}
\newcommand{\arxiv}[1]{#1}
\newcommand{\conference}[1]{}
\newcommand{\conferenceShort}[1]{}
\newcommand{\journal}[1]{}
\newcommand{\colorR}[1]{\textcolor{red}{#1}}
\newcommand{\commentsize}[0]{.45\textwidth}
\newcommand{\commentTM}[1]{\begin{center} \parbox{\commentsize}{\textbf{\textcolor{black}{Comment T.}} \textcolor{red}{#1 }}\end{center}}
\newcommand{\commentYK}[1]{\begin{center} \parbox{\commentsize}{\textbf{\textcolor{black}{Comment Y.}} \textcolor{red}{#1} }\end{center}}
\newcommand{\replyTM}[1]{\begin{center} \parbox{\commentsize}{\textbf{Reply T.} \textcolor{blue}{#1} }\end{center}}
\newcommand{\replyYK}[1]{\begin{center} \parbox{\commentsize}{\textbf{Reply Y.} \textcolor{blue}{#1} }\end{center}}
\newcommand{\commentT}[1]{\marginpar{\footnotesize \color{red} {\bf T:} \textsf{\scriptsize #1}}}
\newcommand{\commentY}[1]{\marginpar{\footnotesize \color{red} {\bf Y:} \textsf{\scriptsize #1}}}
\newcommand{\replyT}[1]{\marginpar{\footnotesize \color{red} {\bf T:} \textsf{\scriptsize #1}}}
\newcommand{\replyY}[1]{\marginpar{\footnotesize \color{red} {\bf Y:} \textsf{\scriptsize #1}}}
\newcommand{\commentTM}[1]{}
\newcommand{\commentYK}[1]{}
\newcommand{\replyTM}[1]{}
\newcommand{\replyYK}[1]{}
\newcommand{\commentT}[1]{}
\newcommand{\commentY}[1]{}
\newcommand{\replyT}[1]{}
\newcommand{\replyY}[1]{}
\newcommand{\pagelimitmarker}[1]{~\\ {\colorR{\ifthenelse{\thepage>#1}{\Huge Exceeding the page limit}{\huge Within the page limit}}}~\\ {\huge{\colorR{Page Limit =  #1 ~~~ Current Page = $\thepage$}}}}
\title{\LARGE \bf
Local Distribution Obfuscation via Probability Coupling*
}
\author{Yusuke Kawamoto$^{1}$ \and Takao Murakami$^{2}$% <-this % stops a space
\thanks{*This work was partially supported by JSPS KAKENHI Grant JP17K12667, JP19H04113, and Inria LOGIS project.}% <-this % stops a space
\thanks{$^{1}$Yusuke Kawamoto is with AIST, Tsukuba, Japan.}%
\thanks{$^{2}$Takao Murakami is with AIST, Tokyo, Japan.}%
}
\begin{document}
\maketitle
\thispagestyle{empty}
\pagestyle{empty}

\begin{abstract}
We introduce a general model for the local obfuscation of probability distributions by probabilistic perturbation, e.g., by adding differentially private noise, and investigate its theoretical properties. Specifically, we relax a notion of distribution privacy (\DistP{}) by generalizing it to divergence, and propose local obfuscation mechanisms that provide divergence distribution privacy. To provide $f$-divergence distribution privacy, we prove that probabilistic perturbation noise should be added proportionally to the Earth mover's distance between the probability distributions that we want to make indistinguishable. Furthermore, we introduce a local obfuscation mechanism, which we call a \emph{coupling mechanism}, that provides divergence distribution privacy while optimizing the utility of obfuscated data by using exact/approximate auxiliary information on the input distributions we want to protect. 
\end{abstract}
\allowdisplaybreaks[1]

\section{Introduction}
\label{sec:introdction}
\emph{Differential privacy} (\DP{})~\cite{Dwork:06:ICALP} is one of the most popular privacy notions that have been studied in various areas, including databases, machine learning, geo-locations, and social networks. 
The protection of \DP{} can be achieved by 
adding probabilistic noise to the data we want to obfuscate.
In particular, many studies have proposed 
\emph{local obfuscation mechanisms}% 
~\cite{Duchi:13:FOCS,Andres:13:CCS,Erlingsson_CCS14} 
that perturb each single ``point'' datum (e.g., a geo-location point) by adding controlled probabilistic noise before sending it out to a data collector.

Recent researches~\cite{Jelasity:IHMMSec:14,Kawamoto:19:ESORICS,Geumlek:19:ISIT} show that local obfuscation mechanisms can be used to hide the probability distributions 
that lie behind such point data 
and implicitly represent sensitive attributes (e.g., age, gender, social status).
In particular, \cite{Kawamoto:19:ESORICS} proposes the notion of \emph{distribution privacy} (\DistP{}) as the local \DP{} of probability distributions.
Roughly, \DistP{} of a local obfuscation mechanism $\alg$ represents that the adversary cannot significantly gain information on the distribution of $\alg$'s input by observing $\alg$'s output.
However, since \DistP{} assumes the worst case risk in the sense of \DP{}, it imposes strong requirement and might unnecessarily lose the utility of obfuscated data.

In this paper, we relax the notion of \DistP{} by generalizing it to an arbitrary divergence.
The basic idea is similar to 
point privacy notions that 
relax \DP{} and improve utility by relying on some divergence (e.g., total variation privacy~\cite{Barber:14:arXiv}, Kullback-Leibler divergence privacy~\cite{Barber:14:arXiv,Cuff:16:CCS}, 
and R\'enyi differential privacy~\cite{Mironov:17:CSF}).
We define the notion of \emph{divergence distribution privacy} by replacing the \DP{}-style with an arbitrary divergence $D$.
This relaxation allows us to formalize ``on-average'' \DistP{}, and to explore privacy notions against an adversary performing the statistical hypothesis test corresponding to the divergence \cite{Barber:14:arXiv}.

Furthermore, we propose and investigate local obfuscation mechanisms that provide divergence \DistP{}.
Specifically, we consider the following two scenarios: 
\begin{enumerate}\renewcommand{\labelenumi}{(\roman{enumi})}
\item when we have no idea on the input distributions;
\item when we know exact or approximate information on the input distributions
(e.g., when we can use public datasets \cite{Yang_TIST15,Yang:19:TKDE} to learn approximate distributions of locations of male/female users if we want to obfuscate the attribute male/female).
\end{enumerate}

For the scenario (i),
we clarify how much perturbation noise should be added to provide $f$-divergence \DistP{}
when we use an existing mechanism for obfuscating point data.
For the scenario (ii),
we introduce a local obfuscation mechanism that provides divergence \DistP{} while optimizing the utility of obfuscated data by using the auxiliary information.
Here it should be noted that 
probability coupling techniques are
crucial in constructing divergence \DistP{} mechanisms in both the scenarios.

\textbf{Our contributions.}~The main contributions are as follows:
\begin{itemize}
\item We introduce notions of divergence \DistP{} and investigate theoretical properties of distribution obfuscation,
especially the relationships between local distribution obfuscation and probability coupling.
\item We investigate the relationships among various notions of \DistP{} based on $f$-divergences, such as Kullback-Leibler divergence, which models ``on-average'' risk.
\item In the scenario (i), we present how much divergence \DistP{} can be achieved by local obfuscation.
In particular, by using probability coupling techniques, we prove that perturbation noise should be added proportionally to the Earth mover's distance between the input distributions that we want to make indistinguishable.
\item In the scenario (ii), we propose a local obfuscation mechanism, called a \emph{(utility-optimal) coupling mechanism}, that provides divergence \DistP{} while minimizing utility loss.
The construction of the mechanism relies on solving an optimal transportation problem using probability coupling.
\item We theoretically evaluate the divergence \DistP{} and utility loss of coupling mechanisms that can use exact/approximate knowledge on the input distributions.
\end{itemize}

\textbf{Paper organization.}~~
The rest of this paper is organized as follows.
Section~\ref{sec:preliminaries} presents %notations and 
background knowledge.
Section~\ref{sec:f-distDP} introduces notions of divergence \DistP{}.
Section~\ref{sec:properties} investigates important properties of divergence \DistP{}, and relationships among privacy notions.
Section~\ref{sec:DistP-by-DP-f} shows that in the scenario (i), an $f$-privacy mechanism can provide $f$-divergence \DistP{}.
Section~\ref{sec:DistP-auxiliary} generalizes \DistP{} to use exact/approximate information on the input distribution in the scenario (ii), and proposes a local mechanism for providing \DistP{} while optimizing utility.
Section~\ref{sec:related-work} discusses related work and
Section~\ref{sec:conclusion} concludes.
\conference{All proofs not appearing in Appendix are presented in the arXiv version~\cite{Kawamoto:Allerton19:arXiv}.}

\section{Preliminaries}
\label{sec:preliminaries}
In this section we recall some notions of privacy, divergence, and metrics used in this paper.

\subsection{Notations}
\label{subsec:notations}

Let $\realsnng$ be the set of non-negative real numbers,
and $[0, 1] \eqdef \{ r\in\realsnng \mid r \le 1 \}$.
Let 
$\varepsilon, \varepsilon_0, \varepsilon_1 \in \realsnng$, 
$\delta, \delta_0, \delta_1 \in [0, 1]$, 
and $e$ be the base of natural logarithm.

We denote by $|\calx|$ the number of elements in a finite set~$\calx$, and by $\Dists\calx$ the \emph{set of all probability distributions} over a set~$\calx$.
Given a probability distribution $\lambda$ over a finite set $\calx$, the probability of drawing a value $x$ from $\lambda$ is denoted by $\lambda[x]$.
For a finite subset $\calx'\subseteq\calx$, we define $\lambda[\calx']$ by $\lambda[\calx'] = \sum_{x'\in\calx'} \lambda[x']$.
For a distribution $\lambda$ over a finite set $\calx$, its \emph{support} is $\supp(\lambda) = \{ x \in \calx \colon \lambda[x] > 0 \}$.

For a randomized algorithm $\alg:\calx\rightarrow\Dists\caly$ and a set $R\subseteq\caly$, we denote by $\alg(x)[R]$ the probability that given an input $x$, $\alg$ outputs one of the elements of $R$.
For a randomized algorithm $\alg:\calx\rightarrow\Dists\caly$ and a distribution $\lambda$ over $\calx$, we define $\lift{\alg}(\lambda)$ as the probability distribution of the output of $\alg$.
Formally, the \emph{lifting} of $\alg:\calx\rightarrow\Dists\caly$ 
is the function $\lift{\alg}: \Dists\calx\rightarrow\Dists\caly$ 
such that for any $R\subseteq\caly$,
$
\lift{\alg}(\lambda)[R] \eqdef
\sum_{x\in\calx}\lambda[x] \alg(x)[R]
$.

\subsection{Differential Privacy}
\label{sub:DP}

\emph{Differential privacy} \cite{Dwork:06:ICALP} is a notion of privacy guaranteeing that we cannot learn which of two ``adjacent'' inputs $x$ and $x'$ is used to generate an output of a randomized algorithm.
This notion is parameterized by a degree $\varepsilon$ of indistinguishability, a ratio $\delta$ of exception, and some adjacency relation $\varPhi$ over a set $\calx$ of data.
The formal definition is given as follows.

\begin{definition}[Differential privacy] \label{def:DP} \rm
A randomized algorithm $\alg: \calx \rightarrow \Dists\caly$ provides \emph{$(\varepsilon,\delta)$-differential privacy (\DP{})} w.r.t. an adjacency relation $\varPhi\subseteq\calx\times\calx$ if for any $(x, x')\in\varPhi$ and any $R\subseteq\caly$,
\[
\Prob[ \alg(x)\in R ] \leq e^\varepsilon \,\Prob[ \alg(x')\in R ] + \delta
\]
where the probability is taken over the random choices in $\alg$.
\end{definition}

Then the protection of \DP{} is stronger for smaller $\varepsilon$ and $\delta$.

\journal{
One of the most important properties of \DP{} is the compositionality.
The sequential composition of \DP{} mechanisms also provides \DP{}:
For any $n$ independent randomized algorithms $\alg_1, \alg_2, \ldots, \alg_n$,
if each $\alg_i$ provides $(\varepsilon_i, \delta_i)$-\DP{}, 
then the sequential composition $\alg_n \circ \alg_{n-1} \circ \ldots \circ \alg_1$ provides $(\sum_{i=1}^{n}\varepsilon_i, \sum_{i=1}^{n} \delta_i)$-\DP{}.
}

\DP{} can be achieved by a \emph{local obfuscation mechanism} or  \emph{privacy mechanism} (illustrated in Fig.~\ref{fig:obfuscate-scenarios}), namely a randomized algorithm that adds controlled noise probabilistically to given inputs that we want to protect.
\begin{figure}[t]
\centering
\centering
\begin{picture}(250, 38)
 \put( 54, 27){\scriptsize Input data}
 \put( 56, 16){{\footnotesize $x$}~{\tiny ($\sim\lambda$)}}
 \put(141, 27){\scriptsize Perturbed output data}
 \put(152, 16){{\footnotesize $y$}~{\tiny ($\sim\lift{\alg}\!(\lambda)$)}}
 \linethickness{0.6pt}
 \put( 88,  0){\framebox(60,20){\footnotesize \mbox{Obfuscater $\alg$}}}
 \linethickness{0.8pt}
 \put( 53,  9){\vector( 1, 0){32}}
 \put(150, 10){\vector( 1, 0){39}}
\end{picture}
\caption{A local obfuscation mechanism $\alg$ perturbs input data $x$ and returns  output data $y$.
Then the underlying probability distribution $\lambda$ can also be seen to be obfuscated.}

\label{fig:obfuscate-scenarios}
\end{figure}

\subsection{Extended Differential Privacy (\XDP{})}
\label{sub:XDP}
The notion of \DP{} can be relaxed by incorporating a metric $d$ over the set $\calx$ of input data.
In~\cite{Chatzikokolakis:13:PETS} Chatzikokolakis \textit{et al.} propose the notion of ``$d$-privacy'', an extension of $(\varepsilon,0)$-\DP{} to a metric $d$ on input data.
Intuitively, this notion guarantees that when two inputs $x$ and $x'$ are closer in terms of $d$, the output distributions are less distinguishable%
\footnote{Compared to \DP{}, \XDP{} provides weaker privacy and higher utility, as it obfuscates closer points.
E.g., \cite{Alvim:18:CSF} shows the planar Laplace mechanism~\cite{Andres:13:CCS} (with \XDP{}) adds less noise than the randomized response (with \DP{}).
}.
Here we show the definition of this extended \DP{} equipped with $\delta$.
\begin{definition}[Extended differential privacy]\label{def:XDP-simple}\rm
Let $d: \calx\times\calx\rightarrow\reals$ be a metric.
We say that a randomized algorithm $\alg: \calx \rightarrow \Dists\caly$ provides \emph{$(\varepsilon,\delta,d)$-extended differential privacy (\XDP{})} if for all $x, x'\in\calx$ and $R\subseteq\caly$,
\begin{align*}
\Prob[ \alg(x)\in R ] \leq e^{\varepsilon d(x,x')} \,\Prob[ \alg(x')\in R ] + \delta
\end{align*}
where the probability is taken over the random choices in $\alg$.
\end{definition}

To achieve \XDP{}, obfuscation mechanisms should add noise proportionally to the distance $d(x,x')$ between the two inputs $x$ and $x'$ that we want to make indistinguishable, 
hence more noise is require for a larger $d(x,x')$.

\subsection{Distribution Privacy and Extended Distribution Privacy}
\label{sub:DistP-XDistP}

Distribution privacy (\DistP{})~\cite{Kawamoto:19:ESORICS} is a privacy notion that measures how much information on the input distribution is leaked by an output of a randomized algorithm.
For example, let $\lambda_{\male}$ (resp. $\lambda_{\female}$) be a (prior) probability distribution of the locations of the male (resp. female) users.
When we observe an output of an obfuscation mechanism $\alg$ and cannot learn whether the input to $\alg$ is drawn from $\lambda_{\male}$ or $\lambda_{\female}$, 
then we say that $\alg$ provides $(\varepsilon, \delta)$-\DistP{} w.r.t. $(\lambda_{\male}, \lambda_{\female})$.
Formally, \DistP{} is defined as follows.

\vspace{0.5ex}
\begin{definition}[Distribution privacy]\label{def:max-DistP}\rm
Let $\varepsilon\in\realsnng$ and $\delta\in[0,1]$.
We say that a randomized algorithm $\alg:\calx\rightarrow\Dists\caly$ provides \emph{$(\varepsilon,\delta)$-distribution privacy (\DistP{}) w.r.t.} 
an adjacency relation $\varPsi\subseteq\Dists\calx\times\Dists\calx$ 
if its lifting $\lift{\alg}:\Dists\calx\rightarrow\Dists\caly$ provides $(\varepsilon,\delta)$-\DP{} w.r.t. $\varPsi$, 
i.e., for all pairs $(\lambda, \lambda')\in\varPsi$ and $R \subseteq\caly$,\, we have
$
\lift{\alg}(\lambda)[R] \leq
e^{\varepsilon}\cdot\lift{\alg}(\lambda')[R] + \delta
{.}
$
\end{definition}
\vspace{0.5ex}

Next we recall an extension~\cite{Kawamoto:19:ESORICS} of \DistP{} with a metric $\utmetric$ as follows.
Intuitively, this extended notion guarantees that when two input distributions are closer,
then the output distributions must be less distinguishable.

\vspace{0.5ex}
\begin{definition}[Extended distribution privacy]\label{def:max-XDP-dist}\rm
Let 
$d:\!(\Dists\calx\allowbreak\times\Dists\calx)\rightarrow\reals$ be a metric, and $\varPsi\subseteq\Dists\calx\times\Dists\calx$.
We say that a mechanism $\alg:\calx\rightarrow\Dists\caly$ provides \emph{$(\varepsilon,d,\delta)$-extended distribution privacy (\XDistP{}) w.r.t.} $\varPsi$ if the lifting $\lift{\alg}$ provides $(\varepsilon,d,\delta)$-\XDP{} w.r.t.~$\varPsi$,
i.e., for all $(\lambda, \lambda')\in\varPsi$ and 
$R\subseteq\caly$,
we have
$
\lift{\alg}(\lambda)[R] \leq
e^{\varepsilon d(\lambda,\lambda')}\cdot
\lift{\alg}(\lambda')[R]+ \delta
{.}
$
\end{definition}
\vspace{0.5ex}

Analogously to \XDP{}, noise should be added proportionally to the distance $d(\lambda,\lambda')$.

\begin{figure}[t]
\begin{minipage}[c]{0.5\textwidth}
\makeatletter
\def\@captype{table}
\makeatother
  \centering
  \vspace{1.5ex}
  \caption{Instances of $f$-divergence}
  \label{table:f-divergence}
  \centering
  \scalebox{0.95}[0.95]{
  \begin{tabular}{ll}
    \hline
    Divergence  & $f(t)$  \\
    \hline \hline
    KL-divergence & $t\log t$ \\
    Reverse KL-divergence\!& $-t\log t$ \\
    Total variation & $\frac{1}{2}|t-1|$ \\
    $\chi^2$-divergence & $(t-1)^2$ \\
    Hellinger distance & $\frac{1}{2}(\sqrt{t}-1)^2$\!\\[0.3ex]
    \hline
  \end{tabular}
  }
\end{minipage}
\end{figure}

\subsection{Divergence}
\label{sub:divergence}

A \emph{divergence} over a non-empty set $\caly$ is a function $\diverge{\cdot\!}{\!\cdot\!}\!:\allowbreak \Dists\caly\times\Dists\caly\rightarrow\realsnng$ such that 
for all $\mu,\mu'\in\Dists\caly$, 
(i)
$\diverge{\mu}{\mu'}\ge0$ and 
(ii)
$\diverge{\mu}{\mu'}=0$ iff $\mu=\mu'$.
Note that a divergence may not be symmetric or subadditive.
We denote by $\Div{\caly}$ the set of all divergences over $\caly$.

Next we recall the notion of (approximate) max divergence, which can be used to define \DP{}.
\begin{definition}[Max divergence]\label{def:max-divergence}\rm
Let $\delta\in[0,1]$ and $\mu, \mu'\in\Dists\caly$.
Then \emph{$\delta$-approximate max divergence} between $\mu$, $\mu'$ is:
\[
\appmaxdiverge{\mu}{\mu'} = 
\max_{\substack{R\subseteq\supp(\mu),  \mu[R]\ge\delta}} \ln \textstyle\frac{ \mu[R] - \delta }{ \mu'[R] }
{.}
\]
\end{definition}

We recall the notion of the $f$-divergences~\cite{CSISZAR:67:SSMH}.
As shown in Table~\ref{table:f-divergence},
many divergence notions (e.g. Kullback-Leibler-divergence~\cite{Kullback:51:AMS}) are instances of 
$f$-divergence.

\begin{definition}[$f$-divergence]\label{def:f-divergence}\rm
Let $\calf$ be the collection of functions defined by:
\[
\calf = \{ f: \reals^+ \rightarrow \reals^+ \mid f \mbox{ is convex and } f(1)=0 \}.
\]
Let $\caly$ be a finite set, and $\mu, \mu'\in\Dists\caly$ such that for every $y\in\caly$, $\mu'[y] = 0$ implies $\mu[y] = 0$.
Then for an $f\in\calf$, the \emph{$f$-divergence} of $\mu$ from $\mu'$ is defined as:
\[~~~~
\fdiverge{\mu}{\mu'} = 
\sum_{\substack{y\in\supp(\mu')}}
\mu'[y]\, f\bigl(
\textstyle\frac{ \mu[y]}{ \mu'[y] }
\bigr)
{.}
\]
\end{definition}

\begin{figure}[t]
\centering
\begin{subfigure}[ht]{0.24\textwidth}
\centering
\begin{picture}(65, 120)
 \put(  0,110){$\lambda$}
 \put( -2, 62){$\gamma$}
 \put(  2, 20){\line( 1, 0){66}}
 \put(  2, 85){\line( 1, 0){66}}
 \linethickness{0.6pt}
 \put( 10, 85){\framebox(10,10){\scriptsize $0.2$}}
 \put( 13, 75){\scriptsize $1$}
 \put( 30, 85){\framebox(10,25){\scriptsize $0.5$}}
 \put( 33, 75){\scriptsize $2$}
 \put( 50, 85){\framebox(10,15){\scriptsize $0.3$}}
 \put( 53, 75){\scriptsize $3$}

 \put(  0, 40){$\mu$}
 \put( 10, 20){\framebox(10,15){\scriptsize $0.3$}}
 \put( 13, 10){\scriptsize $1$}
 \put( 30, 20){\framebox(10,10){\scriptsize $0.2$}}
 \put( 33, 10){\scriptsize $2$}
 \put( 50, 20){\framebox(10,25){\scriptsize $0.5$}}
 \put( 53, 10){\scriptsize $3$}
 \put( 23,  1){\scriptsize location}

 \linethickness{0.8pt}
 \put( 20, 56){\scriptsize $0.1$}
 \put( 40, 56){\scriptsize $0.2$}

 \put( 33, 62){\line( 0, 1){8}}
 \put( 37, 62){\line( 0, 1){8}}
 \put( 15, 62){\line( 1, 0){18}}
 \put( 37, 62){\line( 1, 0){18}}
 \put( 15, 62){\vector( 0, 1){8}}
 \put( 55, 62){\vector( 0, 1){8}}
\end{picture}
\caption{An original distribution $\lambda$ and a target distribution $\mu$.}
\label{fig:eg:two-marginals}
\end{subfigure}%\hfill
~\hspace{0.5ex}~
\begin{subfigure}[ht]{0.23\textwidth}
\centering
\begin{small}
\begin{tabular}{|cc|ccc|}
\hline
$\gamma$
 &   &     & $\mu$ & \\
 &   & 1   & 2   & 3 \\ \hline
 & 1 & 0.2 &     & \\
$\lambda$
 & 2 & 0.1 & 0.2 & 0.2 \\
 & 3 &     &     & 0.3 \\\hline
\end{tabular}
\end{small}
\caption{A coupling $\gamma$ of its two marginals $\lambda$ and~$\mu$ can be interpreted as a transportation that transforms $\lambda$ to $\mu$. 
E.g., to construct $\mu$ from $\lambda$,\, $0.1$ moves from $2$ to $1$, and $0.2$ moves from $2$ to $3$.}
\label{fig:eg:coupling}
\end{subfigure}
\caption{A coupling $\gamma$ that transforms $\lambda$ to $\mu$.}
\label{fig:eg:coupling-mechanism}
\end{figure}

\subsection{Probability Coupling}
\label{sub:probability-coupling}

We recall the notion of probability coupling as follows.

\begin{example}[Coupling as transformation of distributions]\label{eg:coupling-def}
Let us consider two distributions $\lambda$ and $\mu$ shown in 
Fig.~\ref{fig:eg:coupling-mechanism}.
A coupling $\gamma$ of $\lambda$ and $\mu$ shows a way of transforming $\lambda$ to $\mu$.
For example,
$\gamma[2, 1] = 0.1$ moves from $\lambda[2]$ to $\mu[1]$.
\end{example}

Formally, a coupling is defined as follows.

\begin{definition}[Coupling]\label{def:coupling}\rm
Given $\lambda\in\Dists\calx_0$ and $\mu\in\Dists\calx_1$, a \emph{coupling} of $\lambda$ and $\mu$ is a joint distribution $\gamma\in\Dists(\calx_0\times \calx_1)$ such that $\lambda$ and $\mu$ are $\gamma$'s marginal distributions, i.e.,
for each $x_0\in \calx_0$,
$\lambda[x_0] =\!\sum_{x'_1\in \calx_1}\!\gamma[x_0, x'_1]$ and
for each $x_1\in \calx_1$,
$\mu[x_1] =\!\sum_{x'_0\in \calx_0}\!\gamma[x'_0, x_1]$.
We denote by $\cp{\lambda}{\mu}$ the set of all couplings of $\lambda$ and $\mu$.
\end{definition}

\subsection{$p$-Wasserstein Metric}
\label{sub:wasserstein-metric}

Then we recall the $p$-Wasserstein metric~\cite{Vaserstein:69:PPI} 
between two distributions, which is defined using a coupling as follows.

\begin{definition}[$p$-Wasserstein metric]\label{def:p-Wasserstein-metric}\rm
Let $\utmetric$ be a metric over $\calx$, and $p\in\realsone\cup\{\infty\}$.
The \emph{$p$-Wasserstein metric} $\Wpu$ w.r.t. $\utmetric$ is defined by:
for any two distributions $\lambda, \mu\in\Dists\calx$, 
\[
\Wpu(\lambda, \mu) = 
\min_{\gamma\in \cp{\lambda}{\mu}}
\bigl(\hspace{-6ex}
\sum_{\hspace{6ex}(x_0, x_1)\in\supp(\gamma)}\hspace{-6ex}
\utmetric(x_0, x_1)^p \gamma[x_0, x_1]
\bigr)^{\!\frac{1}{p}}
{.}
\]
$\EMDu$ is also called the \emph{Earth mover's distance}.
\end{definition}

The intuitive meaning of $\EMDu(\lambda, \mu)$ is 
the minimum cost of transportation from $\lambda$ to $\mu$ in transportation theory.
As illustrated in 
Fig.~\ref{fig:eg:coupling-mechanism},
we regard the distribution $\lambda$ (resp. $\mu$) as the set of points where each point $x$ has weight $\lambda[x]$ (resp. $\mu[x]$), and we move some weight in $\lambda$ from a point $x_0$ to another $x_1$ to construct $\mu$.
We represent by $\gamma[x_0, x_1]$ the amount of weight moved from $x_0$ to $x_1$.%
\footnote{The amount of weight moved from a point $x_0$ in $\lambda$ is given by $\lambda[x_0] = \sum_{x'_1\in\calx} \gamma[x_0, x'_1]$,
while the amount moved into $x_1$ in $\mu$ is given by $\mu[x_1] = \sum_{x'_0\in\calx} \gamma[x'_0, x_1]$.
Hence $\gamma$ is a coupling of $\lambda$ and $\mu$.
}
We denote by $\utmetric(x_0, x_1)$ the cost (i.e., distance) of move from $x_0$ to $x_1$.
Then the minimum cost of the whole transportation is:
\[
\EMDu(\lambda, \mu) = 
\min_{\gamma\in \cp{\lambda}{\mu}}\hspace{-5ex}
\sum_{\hspace{6ex}(x_0, x_1)\in\supp(\gamma)}\hspace{-6ex}
\utmetric(x_0, x_1)\, \gamma[x_0, x_1]
{.}
\]
E.g., in Fig.~\ref{fig:eg:coupling-mechanism},
when the cost function $d$ is the Euclid distance over $\calx$ (e.g., $d(2, 1) = |2 - 1| = 1$), the transportation $\gamma$ achieves the minimum cost $0.1\cdot 1 + 0.2\cdot 1 = 0.3$.

Let $\GammaP$ the set of all couplings achieving $\Wpu$; i.e.,
\[
\GammaP(\lambda, \mu) = 
\argmin_{\gamma\in \cp{\lambda}{\mu}}
\bigl(\hspace{-6ex}
\sum_{\hspace{6ex}(x_0, x_1)\in\supp(\gamma)}\hspace{-6ex}
\utmetric(x_0, x_1)^p \gamma[x_0, x_1]
\bigr)^{\!\frac{1}{p}}
{.}
\]
Then $\gamma\in\GammaEMD(\lambda, \mu)$ can be efficiently computed by the \emph{North-West corner rule}~\cite{Hoffman:63:PMAMS} when $\utmetric$ is \emph{submodular} \footnote{$\utmetric$ is submodular if $\utmetric(x_0, x_1) + \utmetric(x'_0, x'_1) \leq
\utmetric(x'_0, x_1) + \utmetric(x_0, x'_1)$.}.

\section{Divergence Distribution Privacy}
\label{sec:f-distDP}
In this section we introduce new definitions of distribution privacy generalized to an arbitrary divergence $D$.
The main motivation is to discuss distribution privacy based on $f$-divergences, especially Kullback-Leibler divergence, which models ``on-average'' risk.

\subsection{Divergence \DP{} and Divergence \XDP{}}
\label{subsec:XDP-general}

To generalize distribution privacy notions, we first present a generalized formulation of point privacy parameterized with a divergence $D$.
Intuitively, we say that a randomized algorithm $\alg$ provides $(\varepsilon, D)$-\DP{} if a divergence $D$ cannot distinguish the input to $\alg$ by observing an output of $\alg$.

\begin{definition}[Divergence \DP{} w.r.t. adjacency relation]\label{def:div-DP}\rm
For an adjacency relation $\varPhi\subseteq\calx\times\calx$ and a divergence $D\in\Div{\caly}$, we say that a randomized algorithm $\alg:\calx\rightarrow\Dists\caly$ provides \emph{$(\varepsilon, D)$-\DP{} w.r.t.} $\varPhi$ if for all $(x, x')\!\in\!\varPhi$, we have
$
\diverge{\alg(x)\!}{\!\alg(x')}\!\le\!\varepsilon
\mbox{ ~and~ }
\diverge{\alg(x')\!}{\!\alg(x)}\!\le\!\varepsilon
$
{.}
\end{definition}

Note that some instances of divergence \DP{} are known in the literature.
In~\cite{Barber:14:arXiv}, 
$(\varepsilon, \Df)$-\DP{} is called \emph{$\varepsilon$-$f$-divergence privacy}, 
$(\varepsilon, \DKL)$-\DP{} (\KLP{}) is called \emph{$\varepsilon$-KL-privacy}, and
$(\varepsilon, \DTV)$-\DP{} is called \emph{$\varepsilon$-total variation privacy}.
Furthermore, $(\varepsilon,\mathit{D}_{\infty}^{\delta})$-\DP{} is equivalent to $(\varepsilon,\delta)$-\DP{}, since it is known that $(\varepsilon,\delta)$-\DP{} can be defined using the approximate max divergence~$\mathit{D}_{\infty}^{\delta}$ as follows:
\begin{restatable}{prop}{maxdivDP}
\label{prop:max-div-DP}
A randomized algorithm $\alg: \calx \rightarrow \Dists\caly$ provides $(\varepsilon,\delta)$-\DP{} w.r.t. $\varPhi\subseteq\calx\times\calx$ iff for any $(x, x')\in\varPhi$,\,
$\appmaxdiverge{\alg(x)}{\alg(x')} \le \varepsilon$ and
$\appmaxdiverge{\alg(x')}{\alg(x)} \le~\varepsilon$.
\end{restatable}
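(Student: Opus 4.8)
The plan is to reduce the statement to a single fixed pair of output distributions and then prove the standard correspondence (essentially the Dwork--Roth characterization) between the DP-style inequality and the approximate max divergence. Abbreviate $\mu = \alg(x)$ and $\mu' = \alg(x')$, so that $\mu[R] = \Prob[\alg(x)\in R]$. I would first isolate the core claim: for any $\mu,\mu'\in\Dists\caly$, the bound $\appmaxdiverge{\mu}{\mu'}\le\varepsilon$ holds if and only if $\mu[R]\le e^{\varepsilon}\mu'[R]+\delta$ for every $R\subseteq\caly$. Granting this, the proposition follows by applying the core claim to each ordered pair in $\varPhi$: the DP inequality for $(x,x')$ is equivalent to $\appmaxdiverge{\alg(x)}{\alg(x')}\le\varepsilon$, and --- using that the adjacency relation $\varPhi$ is symmetric, so that $(x',x)\in\varPhi$ as well --- the DP inequality for $(x',x)$ yields $\appmaxdiverge{\alg(x')}{\alg(x)}\le\varepsilon$. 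I will flag that symmetry of $\varPhi$ is genuinely needed here, since the right-hand side demands both orderings while the bare DP condition quantifies over the pairs of $\varPhi$ only once.

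For the core claim, the first simplification is to restrict attention to $R\subseteq\supp(\mu)$. Writing $R' = R\cap\supp(\mu)$, we have $\mu[R']=\mu[R]$ while $\mu'[R']\le\mu'[R]$, so the DP inequality for $R'$ implies the one for $R$; this lets me assume $R\subseteq\supp(\mu)$ throughout, which matches the range of the maximum defining $\appmaxdiverge{\mu}{\mu'}$. The key algebraic step is the equivalence $\ln\frac{\mu[R]-\delta}{\mu'[R]}\le\varepsilon \iff \mu[R]-\delta\le e^{\varepsilon}\mu'[R]$, valid when $\mu'[R]>0$, combined with a split on whether $\mu[R]\ge\delta$. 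For the direction DP $\Rightarrow$ max divergence, I fix any $R\subseteq\supp(\mu)$ with $\mu[R]\ge\delta$ (the only sets entering the maximum) and rearrange the DP inequality into $\mu[R]-\delta\le e^{\varepsilon}\mu'[R]$, i.e.\ $\ln\frac{\mu[R]-\delta}{\mu'[R]}\le\varepsilon$; taking the maximum over such $R$ gives the bound. Conversely, given the max divergence bound, I treat an arbitrary $R$ in two cases: if $\mu[R]<\delta$ then $\mu[R]<\delta\le e^{\varepsilon}\mu'[R]+\delta$ holds trivially, and if $\mu[R]\ge\delta$ then the bound on the logarithm rearranges back into the DP inequality, after which the reduction above extends it to all $R$.

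The main obstacle I expect is the boundary behaviour when $\mu'[R]=0$, where the ratio inside the max divergence is not literally defined and the logarithm must be read as $+\infty$. Here the two formulations meet exactly: finiteness of $\appmaxdiverge{\mu}{\mu'}$ forces that no $R\subseteq\supp(\mu)$ with $\mu'[R]=0$ can have $\mu[R]>\delta$, which is precisely what $\mu[R]\le e^{\varepsilon}\cdot 0+\delta$ asserts, and the degenerate subcase $\mu[R]=\delta$ with $\mu'[R]=0$ yields a zero numerator and is consistent with both sides. I would state the sign/limit convention once (a nonnegative-over-zero ratio has logarithm $+\infty$ unless the numerator is $0$) so that the case analysis closes without further fuss.
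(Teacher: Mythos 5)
Your proof is correct and is the standard Dwork--Roth-style argument; the paper itself does not prove this proposition (it is stated as a known characterization of $(\varepsilon,\delta)$-\DP{}, with proofs not in the appendix deferred to the arXiv version), so there is no in-paper proof to diverge from. Your reduction to sets $R\subseteq\supp(\mu)$, the rearrangement $\mu[R]-\delta\le e^{\varepsilon}\mu'[R]$ with the case split on $\mu[R]\ge\delta$, and the explicit convention for the $\mu'[R]=0$ boundary are exactly what is needed, and your remark that the forward direction requires $(x',x)\in\varPhi$ (symmetry of the adjacency relation) to obtain the second bound $\appmaxdiverge{\alg(x')}{\alg(x)}\le\varepsilon$ is a legitimate point the paper leaves implicit.
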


Next we generalize the notion of extended differential privacy (\XDP{})
to an arbitrary divergence $D$ as follows.

\begin{definition}[Divergence \XDP{}]\label{def:div-XDP}\rm
Let $d: \calx\times\calx\rightarrow\reals$ be a metric, $\varPhi\subseteq\calx\times\calx$, and $D\in\Div{\caly}$.
We say that a randomized algorithm $\alg:\calx\rightarrow\Dists\caly$ provides \emph{$(\varepsilon,d, D)$-\XDP{} w.r.t.} $\varPhi$ if for all $(x, x')\in\varPhi$,\,
$
\diverge{\alg(x)}{\alg(x')} \le \varepsilon d(x,x')
{.}
$
\end{definition}

These notions will be used to define (extended) divergence distribution privacy in the next section.

\subsection{Divergence \DistP{} and Divergence \XDistP{}}
\label{subsec:general:DistP}

In this section we generalize the notion of (extended) distribution privacy to an arbitrary divergence $D$.
The main aim of generalization is to present theoretical properties of distribution privacy in a more general form, and also to discuss distribution privacy based on the $f$-divergences.

Intuitively, we say that a randomized algorithm $\alg$ provides $(\varepsilon, D)$-distribution privacy w.r.t. a set $\varPsi$ of pairs of distributions if for each pair $(\lambda_0, \lambda_1)\in\varPsi$, a divergence $D$ cannot distinguish which distribution (of $\lambda_0$ and $\lambda_1$) is used to generate $\alg$'s input value.

\begin{definition}[Divergence \DistP{}]\label{def:sDistP}\rm
Let 
$D\in\Div{\caly}$, and $\varPsi\subseteq\Dists\calx\times\Dists\calx$.
We say that a randomized algorithm $\alg:\calx\rightarrow\Dists\caly$ provides \emph{$(\varepsilon,D)$-distribution privacy (\DistP{}) w.r.t.}\,$\varPsi$ if the lifting $\lift{\alg}$ provides $(\varepsilon,D)$-\DP{} w.r.t. $\varPsi$,
i.e., for all $(\lambda, \lambda')\in\varPsi$,\,
\vspace{-2ex}
\begin{align*}
\diverge{\lift{\alg}(\lambda)}{\lift{\alg}(\lambda')} \le \varepsilon
{.}
\end{align*}
\end{definition}
\vspace{0.5ex}

As with the generalization of \DP{} to $f$-divergence~\cite{Barber:14:arXiv},\, $D$-\DistP{} expresses privacy against an adversary performing the hypothesis test corresponding to the divergence~$D$.
When $D$ involves averaging (e.g., $D=\DKL$), $D$-\DistP{} formalizes ``on-average'' privacy, which relaxes the original \DistP{}.

Next we introduce \XDistP{} parameterized with a divergence $D$.
Intuitively, \XDistP{} with a divergence $D$ guarantees that when two input distributions $\lambda$ and $\lambda'$ are closer (in terms of a metric $d$), then the output distributions $\lift{\alg}(\lambda)$ and $\lift{\alg}(\lambda')$ must be less distinguishable (in terms of $D$).

\begin{definition}[Divergence \XDistP{}]\label{def:sXDP-dist}\rm
Let 
$d$ be a metric over $\Dists\calx$, 
$D\in\Div{\caly}$, and $\varPsi\subseteq\Dists\calx\times\Dists\calx$.
We say that a randomized algorithm $\alg:\calx\rightarrow\Dists\caly$ 
provides \emph{$(\varepsilon,d, D)$-extended distribution privacy (\XDistP{}) w.r.t.} $\varPsi$ if the lifting $\lift{\alg}$ provides $(\varepsilon,d, D)$-\XDP{} w.r.t. $\varPsi$,
i.e., for all $(\lambda, \lambda')\in\varPsi$,\,
\vspace{-3ex}
\begin{align*}
\diverge{\lift{\alg}(\lambda)}{\lift{\alg}(\lambda')} \le \varepsilon d(\lambda,\lambda')
{.}
\end{align*}
\end{definition}
\vspace{1ex}

\section{Properties of Divergence Distribution Privacy}
\label{sec:properties}
In this section we show useful properties of divergence distribution privacy, such as compositionality and relationships among distribution privacy notions.

\subsection{Basic Properties of Divergence Distribution Privacy}
\label{sub:proofs:summary:properties}

In Tables~\ref{table:summary:basic-properties:DistP:f} and~\ref{table:summary:basic-properties:XDistP:f} we summarize the results on two kinds of sequential compositions $\Seq$ (Fig.~\ref{fig:sequential-shared}) and~$\liftSeq$ (Fig.~\ref{fig:sequential-independent}), post-processing, and pre-processing for divergence \DistP{} and for divergence \XDistP{}, respectively.
We present the details and proofs for these results in
\conference{\cite{Kawamoto:Allerton19:arXiv}.}
\arxiv{Appendices~\ref{sub:composition:Seq:details}, \ref{sub:composition:liftSeq:details}, and~\ref{sub:post-pre-processing}.}

The two kinds of composition have been studies in previous work (e.g.,~\cite{Kawamoto:17:LMCS,Kawamoto:19:ESORICS}).
For two mechanisms $\alg_0$ and $\alg_1$, 
the composition $\alg_1 \Seq \alg_0$ means that an identical input value $x$ is given to two \DistP{} mechanisms $\alg_0$ and $\alg_1$,
whereas$\alg_1 \liftSeq \alg_0$ means that independent inputs $x_b$ are provided to mechanisms $\alg_b$.
Note that this kind of composition is adaptive in the sense that the output of $\alg_1$ can be dependent on that of $\alg_0$.
Hence the compositonality does not hold in general for $f$-divergence, 
whereas we show the compositionality for KL-divergence in Tables~\ref{table:summary:basic-properties:DistP:f} and~\ref{table:summary:basic-properties:XDistP:f}.
For non-adaptive sequential composition, the compositionality of divergence \DistP{}/\XDistP{} is straightforward from~\cite{Barthe:13:ICALP}, which show the compositionality of popular $f$-divergences, including total variation
and Hellinger distance.

As for pre-processing, we use the following definition of stability~\cite{Kawamoto:19:ESORICS}, which is analogous to the stability for \DP{}.
\begin{definition}[Stability]\label{def:c-stable}\rm
Let $c\in\nats$, 
$\varPsi\subseteq\Dists\calx\times\Dists\calx$,
and $\sensfunc$ be a metric over $\Dists\calx$.
A transformation $T:\Dists\calx\rightarrow\Dists\calx$ is \emph{$(c, \varPsi)$-stable} if 
for any $(\lambda_0,\lambda_1)\in\varPsi$,
$T(\lambda_0)$ can be reached from $T(\lambda_1)$ at most $c$-steps over $\varPsi$.\,
Analogously, $T:\Dists\calx\rightarrow\Dists\calx$ is \emph{$(c,\sensfunc)$-stable} if for any $\lambda_0,\lambda_1\in\Dists\calx$, 
$\sensfunc(T(\lambda_0),T(\lambda_1)) \le c \sensfunc(\lambda_0,\lambda_1)$.
\end{definition}

\begin{figure}[t]\label{fig:compositions}
\centering
\begin{subfigure}[t]{0.21\textwidth}
\centering
\begin{picture}(110, 83)
 \put( 52, 69){$\alg_1 \Seq \alg_0$}
 \thicklines
 \put( 48, 40){\framebox(40,20){$\alg_0$}}
 \put( 48,  0){\framebox(40,20){$\alg_1$}}

 \put(  10,  30){\vector(  1,  0){20}}
 \put(  31,  10){\line(  0,  1){40}}
 \put(  31,  50){\vector(  1,  0){16}}
 \put(  31,  10){\vector(  1,  0){16}}
 %\put( 90, 50){\vector(  1,  0){30}}
 \put(  68,  38){\vector(  0, -1){16}}
 \put(  90,  10){\vector(  1,  0){28}}
 \thicklines
 \put(  16,  36){$x$}
 \put(   0,  27){$\lambda$}
 \put(  73,  29){$y_0$}
 \put( 100,  16){$y_1$}
 \put( 42, -3){\dashbox{1.0}(51,67){}}
\end{picture}
\caption{$\Seq$ with shared input.\label{fig:sequential-shared}}
\end{subfigure}\hspace{1ex}\hfill
\begin{subfigure}[t]{0.22\textwidth}
\centering
\begin{picture}(105, 75)
 \put( 41, 69){$\alg_1 \liftSeq \alg_0$}
 \thicklines
 \put( 38, 40){\framebox(40,20){$\alg_0$}}
 \put( 38,  0){\framebox(40,20){$\alg_1$}}

 \put( 11,  50){\vector(  1,  0){25}}
 \put( 11,  10){\vector(  1,  0){25}}
 %\put( 80, 50){\vector(  1,  0){30}}
 \put( 58,  38){\vector(  0, -1){16}}
 \put( 80,  10){\vector(  1,  0){28}}
 \thicklines
 \put( 15,  56){$x_0$}
 \put( 15,  16){$x_1$}
 \put(  0,  48){$\lambda$}
 \put(  0,   8){$\lambda$}
 \put(  63,  29){$y_0$}
 \put(  90,  16){$y_1$}
 \put( 32, -3){\dashbox{1.0}(51,67){}}
\end{picture}
\caption{$\liftSeq$ with independent inputs.\label{fig:sequential-independent}}
\end{subfigure}\hspace{0.4ex}\hfill
\caption{Two kinds of sequential compositions $\Seq$ and $\liftSeq$.\label{fig:sequentials}}
\end{figure}

\begin{table*}[ht]
  \centering
  \vspace{3.5ex}
  \caption{Summary of basic properties of divergence \DistP{}.}
  \label{table:summary:basic-properties:DistP:f}
  \scalebox{0.96}[0.96]{
  \footnotesize
  \renewcommand{\arraystretch}{1.2}
  \begin{tabular}{lll}
    \hline
    Sequential composition $\Seq$ ($\DKL$) &
      $\alg_b$ is $(\varepsilon_b,\DKL)$-\DistP{} \\
    &
      $\Rightarrow$
      $\alg_1 \Seq \alg_0$ is $(\varepsilon_0+\varepsilon_1, \DKL)$-\DistP{} \\
    \hline %\hline
    Sequential composition $\liftSeq$ ($\DKL$) &
      $\alg_b$ is $(\varepsilon_b,\DKL)$-\DistP{} \\
    &
      $\Rightarrow$
      $\alg_1 \liftSeq \alg_0$ is $(\varepsilon_0+\varepsilon_1, \DKL)$-\DistP{} \\
    \hline %\hline
    Post-processing &
      $\alg_0$ is $(\varepsilon,\Df)$-\DistP{}
      $\Rightarrow$
      $\alg_1\circ\alg_0$ is $(\varepsilon,\Df)$-\DistP{} \\
    \hline
    Pre-processing (by $c$-stable $T$) &
      $\alg$ is $(\varepsilon,D)$-\DistP{}
      $\Rightarrow$
      $\alg\circ T$ is $(c\,\varepsilon,D)$-\DistP{} \\
    \hline
  \end{tabular}
  }\vspace{1ex}
\end{table*}

\begin{table*}[ht]
  \centering
  \caption{Summary of basic properties of divergence \XDistP{}.}
  \label{table:summary:basic-properties:XDistP:f}
  \scalebox{0.96}[0.96]{
  \footnotesize
  \renewcommand{\arraystretch}{1.2}
  \begin{tabular}{lll}
    \hline
    Sequential composition $\Seq$ ($\DKL$) &
      $\alg_b$ is $(\varepsilon_b,\sensemd,\DKL)$-\XDistP{} \\
    &
      $\Rightarrow$
      $\alg_1 \Seq \alg_0$ is $(\varepsilon_0+\varepsilon_1, \sensemd, \DKL)$-\XDistP{} \\
    \hline %\hline
    Sequential composition $\liftSeq$ ($\DKL$) &
      $\alg_b$ is $(\varepsilon_b,\sensemd,\DKL)$-\XDistP{} \\
    &
      $\Rightarrow$
      $\alg_1 \liftSeq \alg_0$ is $(\varepsilon_0+\varepsilon_1, \sensemd, \DKL)$-\XDistP{} \\
    \hline %\hline
    Post-processing &
      $\alg_0$ is $(\varepsilon,\sensfunc,\Df)$-\XDistP{}
      \!$\Rightarrow$\!
      $\alg_1{\circ}\alg_0$ is $(\varepsilon,\sensfunc,\Df)$-\XDistP{} \\
    \hline
    Pre-processing (by $c$-stable $T$) &
      $\alg$ is $(\varepsilon,\sensfunc,D)$-\XDistP{}
      \!$\Rightarrow$\!
      $\alg\circ T$ is $(c\,\varepsilon,\sensfunc,D)$-\XDistP{} \\
    \hline
  \end{tabular}
  }
\end{table*}

\subsection{Relationships among Distribution Privacy Notions}
\label{sub:basic-properties:gDistP}

In Fig.~\ref{fig:relation-properties} we show the summary of the relationships among notions of divergence \XDP{} and divergence \XDistP{}.
\arxiv{
See Appendices~%
\ref{sub:proofs:DistP-implies-DP} 
and~\ref{subsec:relationships:notions}}
\conference{
See Appendix~\ref{sub:proofs:DistP-implies-DP}
and~\cite{Kawamoto:Allerton19:arXiv}}
for details and proofs.

\begin{figure*}[t]
\centering
\begin{picture}(405, 85)
 \put( 49, 73){\scriptsize Theorem~2 of~\cite{Kawamoto:19:ESORICS}}
 \put( 65, 45){\tiny if $\delta=0$}
 \put( 55, 37){\scriptsize Theorem~\ref{thm:DPfromDistP}}
 \arxiv{\put(183, 73){\scriptsize Proposition~\ref{prop:relationDDistP}}}
 \conference{\put(173, 73){\scriptsize Proposition 10 in~\cite{Kawamoto:Allerton19:arXiv}}}
 \put(190, 60){\tiny if $\delta=0$}
 \arxiv{\put(183, 23){\scriptsize Proposition~\ref{prop:relationDDistP}}}
 \conference{\put(173, 25){\scriptsize Proposition 10 in~\cite{Kawamoto:Allerton19:arXiv}}}
 \put(190, 14){\tiny if $\delta=0$}
 \arxiv{\put( 82, 25){\scriptsize Proposition~\ref{prop:relationWDistP}}}
 \conference{\put( 70, 25){\scriptsize Proposition 9 in~\cite{Kawamoto:Allerton19:arXiv}}}
 \arxiv{\put(270, 25){\scriptsize Proposition~\ref{prop:relationWDistP}}}
 \conference{\put(270, 25){\scriptsize Proposition 9 in~\cite{Kawamoto:Allerton19:arXiv}}}
 \put(313, 73){\scriptsize Theorem~\ref{thm:f-XDP-dist}}
 \put(319, 45){\tiny if $\delta=0$}
 \put(313, 37){\scriptsize Theorem~\ref{thm:DPfromDistP}}
 \linethickness{0.6pt}
 \put( -5, 47){\framebox(60,20){\scriptsize $(\varepsilon, \utmetric, \Dinf^{\delta})$-\XDP}}
 \put( 93, 47){\framebox(89,20){\scriptsize $(\varepsilon, \sensinf, \Dinf^{\delta\cdot|\varPhi|}\!)$-\XDistP}}
 \put( 93,  0){\framebox(89,20){\scriptsize $(\varepsilon, \sensemd, \Dinf^{\delta\cdot|\varPhi|}\!)$-\XDistP}}
 \put(224, 47){\framebox(89,20){\scriptsize $(\varepsilon, \sensinf, \DKL)$-\XDistP}}
 \put(224,  0){\framebox(89,20){\scriptsize $(\varepsilon, \sensemd, \DKL)$-\XDistP}}
 \put(350, 47){\framebox(60,20){\scriptsize $(\varepsilon, \utmetric, \DKL)$-\XDP}}

 \linethickness{0.8pt}
 \put( 58, 61){\vector( 1, 0){31}}
 \put( 89, 53){\vector(-1, 0){31}}
 \put(184, 57){\vector( 1, 0){36}}
 \put(184, 10){\vector( 1, 0){36}}
 \put(347, 61){\vector(-1, 0){31}}
 \put(316, 53){\vector( 1, 0){31}}
 \put(135, 22){\vector( 0, 1){22}}
 \put(267, 22){\vector( 0, 1){22}}
\end{picture}
\caption{Relationships among divergence \XDistP{} notions.}
\label{fig:relation-properties}
\end{figure*}

\section{Local Mechanisms for Divergence Distribution Privacy}
\label{sec:DistP-by-DP-f}
In this section we present how much degree of divergence \DistP{}/\XDistP{} can be achieved by local obfuscation.
Specifically, we show how $f$-divergence privacy contribute to the obfuscation of probability distributions.
To prove those results, we use the notion of probability coupling.

\subsection{Divergence \DistP{} by Local Obfuscation}
\label{sub:f-DistP}

We first show that $f$-divergence privacy mechanisms provide $\Df$-\DistP{}.
To present this formally, we recall the notion of the lifting of relations as follows.
\begin{definition}[Lifting of relations]\label{def:lifting-relations}\rm
Given a relation $\varPhi\subseteq\calx\times\calx$, the \emph{lifting} of $\varPhi$ is the maximum relation $\lift{\varPhi}\subseteq \Dists\calx\times\Dists\calx$ such that for any $(\lambda_0, \lambda_1)\in\lift{\varPhi}$, there exists a coupling $\gamma\in\cp{\lambda_0}{\lambda_1}$ satisfying $\supp(\gamma)\subseteq\varPhi$.
\end{definition}

Intuitively, when $\lambda_0$ and $\lambda_1$ are adjacent w.r.t. the lifted relation $\lift{\varPhi}$, then we can construct $\lambda_1$ from $\lambda_0$ according to the coupling $\gamma$, that is, only by moving mass from $\lambda_0[x_0]$ to $\lambda_1[x_1]$ where $(x_0, x_1)\in\varPhi$ (i.e., $x_0$ is adjacent to $x_1$).
Note that by Definition~\ref{def:coupling}, the coupling $\gamma$ is a probability distribution over $\varPhi$ whose marginal distributions are $\lambda_0$ and~$\lambda_1$.
If $\varPhi = \calx\times\calx$, then $\lift{\varPhi} = \Dists\calx\times\Dists\calx$.

Now we show that every $f$-divergence privacy mechanism provides $\Df$-\DistP{} as follows.
(See Appendix~\ref{sub:proofs:Df-DistP} for the proof.)

\begin{restatable}[$(\varepsilon, \Df)$-\DP{} $\Rightarrow$ $(\varepsilon, \Df)$-\DistP]{thm}{fDistP}
\label{thm:fDistP}
Let $\varPhi \subseteq \calx\times\calx$.
If a randomized algorithm $\alg:\calx\rightarrow\Dists\caly$ provides $(\varepsilon, \Df)$-\DP{} w.r.t. $\varPhi$, then
it provides $(\varepsilon, \Df)$-\DistP{} w.r.t. $\lift{\varPhi}$.
\end{restatable}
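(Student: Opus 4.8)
The plan is to combine the coupling that witnesses $(\lambda_0,\lambda_1)\in\lift{\varPhi}$ with the joint convexity of $f$-divergences. First I would fix an arbitrary pair $(\lambda_0,\lambda_1)\in\lift{\varPhi}$ and, by Definition~\ref{def:lifting-relations}, choose a coupling $\gamma\in\cp{\lambda_0}{\lambda_1}$ with $\supp(\gamma)\subseteq\varPhi$. The key observation is that the marginal conditions on $\gamma$ let me rewrite both lifted outputs as convex combinations indexed by the \emph{same} pairs $(x_0,x_1)\in\supp(\gamma)$ with identical weights $\gamma[x_0,x_1]$, namely
\[
\lift{\alg}(\lambda_0)=\sum_{(x_0,x_1)\in\supp(\gamma)}\gamma[x_0,x_1]\,\alg(x_0),
\qquad
\lift{\alg}(\lambda_1)=\sum_{(x_0,x_1)\in\supp(\gamma)}\gamma[x_0,x_1]\,\alg(x_1),
\]
using $\lambda_0[x_0]=\sum_{x_1}\gamma[x_0,x_1]$ and $\lambda_1[x_1]=\sum_{x_0}\gamma[x_0,x_1]$ from Definition~\ref{def:coupling}.

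Next I would apply the joint convexity of $\Df$. Since the weights $\gamma[x_0,x_1]$ form a probability distribution, joint convexity gives
\[
\fdiverge{\lift{\alg}(\lambda_0)}{\lift{\alg}(\lambda_1)}
\le
\sum_{(x_0,x_1)\in\supp(\gamma)}\gamma[x_0,x_1]\,\fdiverge{\alg(x_0)}{\alg(x_1)}
{.}
\]
Because $\supp(\gamma)\subseteq\varPhi$, every pair $(x_0,x_1)$ in the sum lies in $\varPhi$, so the $(\varepsilon,\Df)$-\DP{} hypothesis yields $\fdiverge{\alg(x_0)}{\alg(x_1)}\le\varepsilon$ (this is exactly the direction of the \DP{} condition matching the order of arguments here). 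Substituting and using $\sum_{(x_0,x_1)}\gamma[x_0,x_1]=1$ gives $\fdiverge{\lift{\alg}(\lambda_0)}{\lift{\alg}(\lambda_1)}\le\varepsilon$, which is precisely $(\varepsilon,\Df)$-\DistP{} w.r.t.\ $\lift{\varPhi}$ by Definition~\ref{def:sDistP}.

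The main obstacle is establishing joint convexity in the exact discrete form required. I would prove it through the perspective function $g(s,t)=t\,f(s/t)$, which is jointly convex on $\realsnng\times\realsnng$ precisely because $f\in\calf$ is convex; since $\fdiverge{\mu}{\mu'}=\sum_{y}g(\mu[y],\mu'[y])$, the joint convexity of $g$ transfers termwise to $\Df$. A secondary detail is the absolute-continuity precondition of Definition~\ref{def:f-divergence}: I would check that $\lift{\alg}(\lambda_1)[y]=0$ forces $\lift{\alg}(\lambda_0)[y]=0$. This holds because every $x_0\in\supp(\lambda_0)$ satisfies $\lambda_0[x_0]=\sum_{x_1}\gamma[x_0,x_1]>0$, hence is coupled to some $x_1$ with $(x_0,x_1)\in\supp(\gamma)\subseteq\varPhi$ and $x_1\in\supp(\lambda_1)$; then $\lift{\alg}(\lambda_1)[y]=0$ forces $\alg(x_1)[y]=0$, and the finiteness of $\fdiverge{\alg(x_0)}{\alg(x_1)}$ (guaranteed by $(\varepsilon,\Df)$-\DP{}) propagates this to $\alg(x_0)[y]=0$, so the $f$-divergence on the left is well defined.
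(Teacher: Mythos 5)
Your proposal is correct and follows essentially the same route as the paper's proof: both take the coupling $\gamma$ witnessing $(\lambda_0,\lambda_1)\in\lift{\varPhi}$, express $\lift{\alg}(\lambda_0)$ and $\lift{\alg}(\lambda_1)$ as mixtures of the $\alg(x_0)$ and $\alg(x_1)$ with common weights $\gamma[x_0,x_1]$, bound $\fdiverge{\lift{\alg}(\lambda_0)}{\lift{\alg}(\lambda_1)}$ by $\sum_{(x_0,x_1)}\gamma[x_0,x_1]\,\fdiverge{\alg(x_0)}{\alg(x_1)}$, and finish with the $(\varepsilon,\Df)$-\DP{} hypothesis on $\supp(\gamma)\subseteq\varPhi$. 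The only cosmetic difference is that you package the key inequality as joint convexity of $\Df$ proved via the perspective function, whereas the paper derives the same bound inline by Jensen's inequality applied to $f$; your explicit treatment of the absolute-continuity precondition is a welcome addition the paper leaves implicit.
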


Intuitively, the $f$-divergence privacy mechanism $\alg$ makes any pair $(\lambda_0, \lambda_1)$ of input distributions in $\lift{\varPhi}$ indistinguishable in terms of $\Df$ up to the threshold~$\varepsilon$.

\subsection{Divergence \XDistP{} by Local Obfuscation}
\label{sub:f-XDistP}
Next we investigate how much noise should be added for local obfuscation mechanisms to provide divergence \XDistP{}.

We first consider two point distributions $\lambda_0$ at $x_0$ and $\lambda_1$ at $x_1$, i.e., $\lambda_0[x_0] = \lambda_1[x_1] = 1$.
Then an $(\varepsilon, \utmetric, \Df)$-\XDP{} mechanism $\alg$ satisfies:
\begin{align*}
\fdiverge{\!\lift{\alg}\!(\lambda_0)\!}{\!\lift{\alg}\!(\lambda_1)\!}
=
\fdiverge{\!\alg(x_0)\!}{\!\alg(x_1)\!}
\le \varepsilon \utmetric(x_0,x_1)
{.}
\end{align*}
Hence the noise added by $\alg$ should be proportional to the distance $\utmetric(x_0, x_1)$ between $x_0$ and $x_1$.

To generalize this observation on point distributions to arbitrary distributions, we need to employ some metric between distributions.
As the metric,
we could use 
the \emph{diameter} over the supports, which is defined by:
\[
\diam(\lambda_0, \lambda_1) = 
\max_{x_0\in\supp(\lambda_0), x_1\in\supp(\lambda_1)} \utmetric(x_0, x_1)
{,}
\]
or the $\infty$-Wasserstein metric $\sensinf$, which is used for \XDistP{}~\cite{Kawamoto:19:ESORICS}.
However, when there is an outlier in $\lambda_0$ or $\lambda_1$, then
$\diam(\lambda_0, \lambda_1)$ and 
$\sensinf(\lambda_0, \lambda_1)$ tend to be large.
Since the mechanism needs to add noise proportionally to the distance $\diam(\lambda_0, \lambda_1)$ or $\sensinf(\lambda_0, \lambda_1)$ to achieve $\XDistP{}$, it needs to add large amount of noise and thus loses utility significantly.

To have better utility, we employ the Earth mover's distance ($1$-Wasserstein metric) $\EMDu$ as a metric for $\Df$-\XDistP{} mechanisms.
Given two distributions $\lambda_0$ and $\lambda_1$ over $\calx$, we consider a transportation $\gamma$ from $\lambda_0$ to $\lambda_1$ that minimizes the expected cost of the transportation.
Then the minimum of the expected cost is given by the Earth mover's distance $\EMDu(\lambda_0,\lambda_1)$.

Now we show that, to achieve $\Df$-\XDistP{}, we only have to add noise proportionally to the Earth mover's distance $\sensemd$ between the input distributions.
To formalize this, we define a lifted relation $\liftp{\varPhi}$ as the maximum relation over $\Dists\calx$ s.t. for any $(\lambda_0, \lambda_1)\in\liftp{\varPhi}$, there is a coupling $\gamma\in\cp{\lambda_0}{\lambda_1}$ satisfying $\supp(\gamma)\subseteq\varPhi$ and $\gamma\in\GammaP(\lambda_0, \lambda_1)$.

\begin{restatable}[$(\varepsilon, \utmetric, \Df)$-\XDP{} $\Rightarrow$ $(\varepsilon, \sensemd, \Df)$-\XDistP]{thm}{fXDPdist}
\label{thm:f-XDP-dist}
Let 
$\utmetric: \calx\times\calx\rightarrow\reals$ be a metric.
If a randomized algorithm $\alg:\calx\rightarrow\Dists\caly$ provides $(\varepsilon, \utmetric, \Df)$-\XDP{} w.r.t. $\varPhi$ then
it provides $(\varepsilon, \sensemd, \Df)$-\XDistP{} 
w.r.t. $\liftemd{\varPhi}$.
\end{restatable}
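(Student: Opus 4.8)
The plan is to turn the statement into a transport argument: an optimal coupling lets me write both output distributions as mixtures sharing the same weights, after which the joint convexity of $\Df$ pushes the divergence inside the mixture and the \XDP{} hypothesis finishes the job term by term.

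First I would unpack the definition of $\liftemd{\varPhi}$. Given $(\lambda_0, \lambda_1)\in\liftemd{\varPhi}$, fix a coupling $\gamma\in\cp{\lambda_0}{\lambda_1}$ with $\supp(\gamma)\subseteq\varPhi$ and $\gamma\in\GammaEMD(\lambda_0,\lambda_1)$, so that $\sensemd(\lambda_0,\lambda_1)=\sum_{(x_0,x_1)\in\supp(\gamma)}\utmetric(x_0,x_1)\,\gamma[x_0,x_1]$. Using the marginal conditions of Definition~\ref{def:coupling} together with the definition of the lifting $\lift{\alg}$, I would rewrite
\[
\lift{\alg}(\lambda_0)=\sum_{(x_0,x_1)}\gamma[x_0,x_1]\,\alg(x_0),\qquad
\lift{\alg}(\lambda_1)=\sum_{(x_0,x_1)}\gamma[x_0,x_1]\,\alg(x_1).
\]
The point is that the two outputs are now convex combinations of the families $\{\alg(x_0)\}$ and $\{\alg(x_1)\}$ indexed by the \emph{same} pairs and carrying the \emph{same} coefficients $\gamma[x_0,x_1]$, whose total weight is $1$.

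Next I would invoke the joint convexity of $\Df$: for any finite family of weights $w_i\ge0$ with $\sum_i w_i=1$ and distributions $\mu_i,\mu_i'$, one has $\fdiverge{\sum_i w_i\mu_i}{\sum_i w_i\mu_i'}\le\sum_i w_i\,\fdiverge{\mu_i}{\mu_i'}$. This is the key ingredient and the step I expect to require the most care: it follows from the fact that the perspective $(p,q)\mapsto q\,f(p/q)$ is jointly convex in $(p,q)$ for $q>0$ exactly because $f$ is convex, and that $\fdiverge{\mu}{\mu'}=\sum_y \mu'[y]\,f(\mu[y]/\mu'[y])$ is a finite sum of such perspectives. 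Applying this with $i=(x_0,x_1)$, $w_i=\gamma[x_0,x_1]$, $\mu_i=\alg(x_0)$, $\mu_i'=\alg(x_1)$ yields
\[
\fdiverge{\lift{\alg}(\lambda_0)}{\lift{\alg}(\lambda_1)}
\le\sum_{(x_0,x_1)\in\supp(\gamma)}\gamma[x_0,x_1]\,\fdiverge{\alg(x_0)}{\alg(x_1)}.
\]

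Finally I would bound each summand with the hypothesis. Since $\supp(\gamma)\subseteq\varPhi$, every pair $(x_0,x_1)$ occurring in the sum lies in $\varPhi$, so $(\varepsilon,\utmetric,\Df)$-\XDP{} gives $\fdiverge{\alg(x_0)}{\alg(x_1)}\le\varepsilon\,\utmetric(x_0,x_1)$. Substituting and factoring out $\varepsilon$ leaves $\varepsilon\sum_{(x_0,x_1)\in\supp(\gamma)}\utmetric(x_0,x_1)\,\gamma[x_0,x_1]=\varepsilon\,\sensemd(\lambda_0,\lambda_1)$, where the last equality is the optimality $\gamma\in\GammaEMD(\lambda_0,\lambda_1)$; this is precisely $(\varepsilon,\sensemd,\Df)$-\XDistP{} w.r.t.\ $\liftemd{\varPhi}$. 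Aside from the joint-convexity step, the only technicality to track is the absolute-continuity requirement in Definition~\ref{def:f-divergence} ($\mu'[y]=0\Rightarrow\mu[y]=0$): it holds for each pair $(\alg(x_0),\alg(x_1))$ because the right-hand divergences are assumed finite, and it is inherited by the mixtures since $\lift{\alg}(\lambda_1)[y]=0$ forces $\alg(x_1)[y]=0$ for all $(x_0,x_1)\in\supp(\gamma)$, hence $\alg(x_0)[y]=0$ and $\lift{\alg}(\lambda_0)[y]=0$.
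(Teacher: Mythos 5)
Your proposal is correct and takes essentially the same route as the paper: it decomposes $\lift{\alg}(\lambda_0)$ and $\lift{\alg}(\lambda_1)$ as mixtures with common weights $\gamma[x_0,x_1]$ over an optimal coupling supported on $\varPhi$, pushes $\Df$ inside the mixture by convexity (the paper carries out exactly your joint-convexity step by hand, via Jensen's inequality applied to the perspective form $q\,f(p/q)$ in its equation \eqref{eq:point:f-divergence}), bounds each summand by $\varepsilon\,\utmetric(x_0,x_1)$ from the \XDP{} hypothesis, and identifies the remaining sum with $\sensemd(\lambda_0,\lambda_1)$ by optimality of $\gamma$. Your closing remark on absolute continuity of the mixtures is a small point of care the paper leaves implicit, but it does not change the argument.
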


See Appendix~\ref{sub:proofs:Df-DistP} for the proof.
Since the Earth mover's distance is not grater than the diameter or $\infty$-Wasserstein distance, $\Df$-\XDistP{} may require less noise than $\Dinf$-\XDistP{}.

\section{Local Distribution Obfuscation with Auxiliary Inputs}
\label{sec:DistP-auxiliary}
In this section we introduce a local obfuscation mechanism which we call a \emph{coupling mechanism} in order to provide distribution privacy while optimizing utility.
Specifically, a coupling mechanism uses (full or approximate) knowledge on the input probability distributions to perturb each single input value so that the output distribution gets indistinguishable from some target probability distribution.
To define the mechanism, we calculate the probability coupling of each input distribution and the target distribution.

\subsection{Privacy Definitions with Auxiliary Inputs}

We first extend the definition of divergence \DistP{} so that a local obfuscation mechanism $\alg$ can receive some \emph{auxiliary input} (e.g. context information) ranging over a set~$\cals$, which might be used for $\alg$ to apply different randomized algorithms in different situations or to different input distributions.

\begin{definition}[Divergence \DistP{} with auxiliary inputs]\label{def:divDistP}\rm
Let $\varepsilon\in\realsnng$, $D\in\Div{\caly}$, and
$\varPsi\subseteq(\cals\times\Dists\calx)\times(\cals\times\Dists\calx)$.
We say that a randomized algorithm $\alg:\cals\times\calx\rightarrow\Dists\caly$ 
provides \emph{$(\varepsilon,D)$-distribution privacy w.r.t.}\,$\varPsi$ if for all pairs $((s, \lambda), (s', \lambda'))\in\varPsi$,\,
\vspace{-1.5ex}
\begin{align*}
\diverge{\lift{\alg}(s, \lambda)}{\lift{\alg}(s', \lambda')} \le \varepsilon
{.}
\end{align*}
\end{definition}
\vspace{0.5ex}

In this definition, the auxiliary input over $\cals$ typically represents contextual information about where the obfuscation mechanism $\alg$ is used or what distribution an input is sampled from.
Such information may be useful to customize $\alg$ to improve utility while providing distribution privacy in specific situations.
For example, assume that each auxiliary input $s$ represents the fact that an input $x$ is sampled from a distribution $\lambda_s$.
If a local mechanism $\alg$ uses this auxiliary information to always produce a distribution $\mu$ of outputs\footnote{If $\alg$ can use no auxiliary information but wants to produce $\mu$, then the output value needs to be independent of the input, hence very poor utility.},
it can prevent the leakage of information on the input distribution $\lambda_s$.
We elaborate on this in the next sections.

\subsection{Coupling Mechanisms}
\label{sub:coupling-mechanism}

In this section we introduce a new local obfuscation mechanism, which we call a \emph{coupling mechanism}.
The aim of the new mechanism is to improve the utility while protecting distribution privacy when we know the input distribution fully or approximately.
Intuitively, a coupling mechanism uses (full or partial) information on the input distribution $\lambda\in\Dists\calx$ and produces an output value following some identical distribution $\mu\in\Dists\caly$, which we call a \emph{target distribution}.
More specifically, given some auxiliary information $s$ about~$\lambda$, a coupling mechanism $\alg: \cals\times\calx\rightarrow\Dists\caly$ probabilistically maps each  input value $x$ to some output value $y$ so that $y$ is distributed over the target distribution $\mu$.

The simplest construction of a coupling mechanism would be to randomly sample a value $y$ from $\mu$ independently of the input $x$.
However, this mechanism provides very poor utility, since the output $y$ loses all information on $x$.

Instead, we construct a mechanism by calculating a coupling $\gamma\in\Dists(\calx\times\caly)$ that transforms $\lambda$ to $\mu$ with the minimum loss.
We explain this using a simple example below.

\begin{example}[Coupling mechanism]
\label{eg:coupling-mechanism}
A coupling $\gamma$ of two distributions $\lambda$ and $\mu$ (Fig.~\ref{fig:eg:coupling}) shows a way of transforming $\lambda$ to $\mu$ by probabilistically adding noise to each single input value drawn from $\lambda$.
More specifically, $\gamma[2, 1] = 0.1$ means that $0.1$ (out of $\lambda[2]=0.5$) moves from $2$ to $1$,
and $\gamma[2, 3] = 0.2$ means that $0.2$ moves from $2$ to $3$.
Based on this coupling $\gamma$, we construct the coupling mechanism $\CP$ that maps the input $2$ to the output $1$ with probability $20\% (= 0.1/0.5)$, and to the output $3$ with probability $40\% (= 0.2/0.5)$.
By applying this mechanism $\CP$ to the input distribution $\lambda$, the resulting output distribution $\lift{\CP}(\lambda)$ is identical to~$\mu$.
\end{example}

Formally, we assume that for each auxiliary input $s \in\cals$, we learn that the input distribution is approximately $\hlambda_s \in\Dists\calx$ while the actual distribution is $\lambda_s \in\Dists\calx$.
Then we define the coupling mechanism $\CP$ as follows.

\begin{definition}[Coupling mechanism]\label{def:cp-mechanism}\rm
Let $\mu\in\Dists\caly$.
For each $s\in\cals$, let $\hlambda_s \in\Dists\calx$ be an approximate input distribution, and
$\gamma_s\in\cp{\hlambda_s}{\mu}$ be a coupling of $\hlambda_s$ and $\mu$.
Then a \emph{coupling mechanism} w.r.t. $\mu$ is defined as a randomized algorithm $\CP:\cals\times\calx\rightarrow\Dists\caly$ such that given $s\in\cals$ and $x\in\calx$, outputs $y\in\caly$ with the probability:
\begin{align*}
\CP(s, x)[y] =& 
\textstyle\frac{\gamma_s[x, y]}{\hlambda_s[x]}
{.}
\end{align*}
\end{definition}
\vspace{1ex}

When $\CP$ can access the exact information on $\lambda_s$ (i.e., $\hlambda_s$ is identical to the actual distribution $\lambda_s$ from which inputs are sampled), then $\CP$ provides $(0, D)$-\DistP{} for any divergence $D$, i.e., no information on the input distribution is leaked by the output of $\CP$.
However, we often obtain only approximate information on the input distribution.
In this case, $\CP$ still provides strong privacy as shown in the next section.

\subsection{Distribution Privacy of Coupling Mechanisms}
\label{sub:coupling:DistP-Loss}

In this section we evaluate the \DistP{} and utility of coupling mechanisms.
\arxiv{(See Appendix~\ref{sub:proof:cp} for the proof.)}
\conference{(See~\cite{Kawamoto:Allerton19:arXiv} for the proof.)}

\begin{restatable}[\DistP{} of the coupling mechanism]{thm}{firstCouplingMaxDistP}
\label{thm:firstCouplingMaxDistP}
Let $\varPsi \subseteq (\cals\times\Dists\calx)\times(\cals\times\Dists\calx)$ such that each element of $\varPsi$ is of the form $(s,\lambda_s)$ for some $s\in\cals$.
Let $\CP$ be a coupling mechanism w.r.t. a target distribution $\mu$.
Assume that for each $s\in\cals$, the approximate knowledge $\hlambda_s$ is close to the actual distribution $\lambda_s$ in the sense that $\maxdiverge{\hlambda_s}{\lambda_s} \le \varepsilon$ and $\maxdiverge{\lambda_s}{\hlambda_s} \le \varepsilon$.
Then $\CP$ provides:
\begin{enumerate}
\item $(2\varepsilon, \Dinf)$-\DistP{} w.r.t.~$\varPsi$;
\item $(2\varepsilon\, e^\varepsilon, \DKL)$-\DistP{} w.r.t.~$\varPsi$;
\item $(e^\varepsilon f(e^{2\varepsilon}), \Df)$-\DistP{} w.r.t.~$\varPsi$.
\end{enumerate}
\end{restatable}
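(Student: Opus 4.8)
The plan is to reduce the distribution-privacy statement to a comparison between the two output distributions $\lift{\CP}(s,\lambda_s)$ and $\lift{\CP}(s',\lambda_{s'})$ by passing through the common target distribution $\mu$. The key observation is that if $\CP$ had access to the \emph{exact} input distribution $\lambda_s$, then by the construction in Definition~\ref{def:cp-mechanism} we would have $\lift{\CP}(s,\lambda_s) = \mu$ exactly, since the coupling $\gamma_s$ has $\mu$ as one of its marginals. The only source of divergence is therefore the mismatch between the approximate knowledge $\hlambda_s$ and the true $\lambda_s$. So first I would compute $\lift{\CP}(s,\lambda_s)[y]$ explicitly: using $\CP(s,x)[y] = \gamma_s[x,y]/\hlambda_s[x]$ and the definition of lifting, this equals $\sum_{x} \lambda_s[x]\,\gamma_s[x,y]/\hlambda_s[x]$. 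Comparing this to $\mu[y] = \sum_x \gamma_s[x,y]$ (the marginal property), each term is scaled by the ratio $\lambda_s[x]/\hlambda_s[x]$.

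The core of the argument is then to bound this output distribution against $\mu$ in max divergence. The hypotheses $\maxdiverge{\hlambda_s}{\lambda_s}\le\varepsilon$ and $\maxdiverge{\lambda_s}{\hlambda_s}\le\varepsilon$ give (with $\delta=0$) the pointwise bounds $e^{-\varepsilon}\hlambda_s[x]\le\lambda_s[x]\le e^{\varepsilon}\hlambda_s[x]$ for all $x$ in the support. Feeding these into the expression for $\lift{\CP}(s,\lambda_s)[y]$ yields $e^{-\varepsilon}\mu[y] \le \lift{\CP}(s,\lambda_s)[y] \le e^{\varepsilon}\mu[y]$, i.e.\ $\maxdiverge{\lift{\CP}(s,\lambda_s)}{\mu}\le\varepsilon$ and symmetrically $\maxdiverge{\mu}{\lift{\CP}(s,\lambda_s)}\le\varepsilon$. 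The same holds for $s'$. For part~(1), I would then chain these through $\mu$: since $\Dinf$ satisfies a triangle-style inequality in the exponent (the $e^\varepsilon$ factors multiply), $\maxdiverge{\lift{\CP}(s,\lambda_s)}{\lift{\CP}(s',\lambda_{s'})} \le 2\varepsilon$, which is exactly $(2\varepsilon,\Dinf)$-\DistP{}.

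For parts~(2) and~(3) the strategy is to leverage the uniform multiplicative closeness $e^{-2\varepsilon}\le \lift{\CP}(s,\lambda_s)[y]/\lift{\CP}(s',\lambda_{s'})[y]\le e^{2\varepsilon}$, which follows by combining the two one-sided bounds against $\mu$. For the generic $f$-divergence bound in~(3), I would write $\fdiverge{\lift{\CP}(s,\lambda_s)}{\lift{\CP}(s',\lambda_{s'})} = \sum_y \lift{\CP}(s',\lambda_{s'})[y]\,f(t_y)$ where $t_y$ is that ratio, and use that $t_y\in[e^{-2\varepsilon},e^{2\varepsilon}]$ together with convexity and $f(1)=0$; the extra $e^\varepsilon$ prefactor in $e^\varepsilon f(e^{2\varepsilon})$ should emerge from relating $\lift{\CP}(s',\lambda_{s'})[y]$ back to $\mu[y]$ via the one-sided factor. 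Part~(2) is then the specialization $f(t)=t\log t$, where $f(e^{2\varepsilon}) = 2\varepsilon\,e^{2\varepsilon}$, and one must check the claimed coefficient $2\varepsilon\,e^\varepsilon$ matches after the prefactor; a short monotonicity or mean-value estimate on $t\log t$ over $[1,e^{2\varepsilon}]$ may be needed rather than plugging in the endpoint directly.

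The main obstacle I anticipate is pinning down the exact constants in~(2) and~(3), and in particular justifying that the worst case for the $f$-divergence is attained at the endpoint ratio $e^{2\varepsilon}$ so that the bound is $f(e^{2\varepsilon})$ rather than some integral average; this requires using convexity of $f$ to argue that each summand $\lift{\CP}(s',\lambda_{s'})[y]\,f(t_y)$ is controlled by its value at the extreme ratio. The cleanest route is probably to first prove~(3) in full generality and then derive~(1) and~(2) as instances (taking $f$ corresponding to $\Dinf$ and $\DKL$), checking that the specialized constants simplify to the stated $2\varepsilon$ and $2\varepsilon\,e^\varepsilon$. Everything else is routine manipulation of the lifting formula and the max-divergence hypotheses.
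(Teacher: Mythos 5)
Your proposal matches the paper's proof essentially step for step: the paper also computes $\lift{\CP}(s,\lambda_s)[R]=\sum_x \lambda_s[x]\,\gamma_s[x,R]/\hlambda_s[x]$, uses the two one-sided max-divergence hypotheses to sandwich each output distribution between $e^{-\varepsilon}\mu$ and $e^{\varepsilon}\mu$, combines these to get the ratio bound $e^{2\varepsilon}$ for part (1), and then derives (2) and (3) from that same ratio bound together with the $e^{\varepsilon}$ prefactor from comparing the reference output distribution back to $\mu$. The only cosmetic difference is that the paper proves the three parts separately rather than specializing (3), and it simply plugs in the endpoint $f(e^{2\varepsilon})$ without the justification you (rightly) flag as needing care.
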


This theorem implies that when the mechanism $\CP$ learns the exact distribution, i.e., $\hlambda_s = \lambda_s$, then by $\varepsilon = 0$ it provides $(0, \Dinf)$-\DistP{}, hence there is no leaked information on the input distributions.
For $\varepsilon \approx 0$, we have $\varepsilon\, e^\varepsilon \approx \varepsilon(1 + \varepsilon) \approx \varepsilon$,
hence $\CP$ provides approximately $(2\varepsilon, \DKL)$-\DistP{} .

\subsection{Utility-Optimal Coupling Mechanisms}
\label{sub:coupling:Loss}

In this section we introduce a utility-optimal coupling mechanism.
Here we assume that there is some metric $\utmetric$ over $\calx\cup\caly$.
Then the notion of utility loss of a local obfuscation mechanism is defined as follows.
\vspace{0.5ex}
\begin{definition}[Expected utility loss]\label{def:expected-utility-loss}\rm
Given an input distribution $\lambda\in\Dists\calx$ and a metric $\utmetric$ over $\calx\cup\caly$, 
the \emph{expected utility loss} of a randomized algorithm $\alg: \calx\rightarrow\Dists\caly$ is:
\begin{align*}
\sum_{x\in\calx, y\in\caly}
\lambda[x] \alg(x)[y] \utmetric(x, y)
{.}
\end{align*}
\end{definition}
\vspace{0.5ex}

The utility loss of a coupling mechanism depends on the choice of the coupling used in the mechanism.
Given an Euclid distance $\utmetric$ and an input distribution $\hlambda_s$, the expected utility loss of a coupling mechanism w.r.t. a target distribution $\mu$ using a coupling $\gamma_s$ is represented by 
$\sum_{(x_0, x_1)\in\supp(\gamma_s)} \utmetric(x_0, x_1) \gamma_s[x_0, x_1]$.

Now we define the coupling mechanism that minimizes the expected utility loss as follows.

\begin{definition}[Utility-optimal coupling mechanism]\label{def:opt-cp-mechanism}\rm
Let $\mu\in\Dists\caly$.
A \emph{utility-optimal coupling mechanism} w.r.t. $\mu$ is a coupling mechanism w.r.t. $\mu$ that uses a coupling $\gamma_s\in\GammaEMD(\lambda_{s}, \mu)$ for each $s\in\cals$.
\end{definition}

\begin{restatable}[Loss of the coupling mechanism]{prop}{CouplingLoss}
\label{prop:CouplingLoss}
For each $s\in\cals$, the expected utility loss of a utility-optimal coupling mechanism w.r.t. a target distribution $\mu\in\Dists\caly$ is given by the Earth mover's distance $\EMDu(\hlambda_s, \mu)$.
\end{restatable}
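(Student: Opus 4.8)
The plan is to compute the expected utility loss directly from the definition of the coupling mechanism and then recognize the resulting quantity as the optimal-transport cost that defines $\EMDu$.

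First I would instantiate Definition~\ref{def:expected-utility-loss} for the mechanism $\CP(s, \cdot)\colon \calx\rightarrow\Dists\caly$ with the input distribution $\hlambda_s$, which gives the expected utility loss
\[
\sum_{x\in\calx,\, y\in\caly} \hlambda_s[x]\, \CP(s, x)[y]\, \utmetric(x, y).
\]
Next I would substitute the defining probabilities of the coupling mechanism from Definition~\ref{def:cp-mechanism}, namely $\CP(s, x)[y] = \gamma_s[x, y] / \hlambda_s[x]$, and cancel the common factor $\hlambda_s[x]$ in each summand. The one point requiring care is the treatment of inputs $x$ with $\hlambda_s[x] = 0$, where the division is undefined; but since $\hlambda_s$ is a marginal of the coupling $\gamma_s$ (Definition~\ref{def:coupling}), we have $\gamma_s[x, y] = 0$ for every such $x$ and all $y$, so these summands vanish and may simply be dropped. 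After cancellation the loss becomes
\[
\sum_{x\in\calx,\, y\in\caly} \gamma_s[x, y]\, \utmetric(x, y)
= \sum_{(x, y)\in\supp(\gamma_s)} \utmetric(x, y)\, \gamma_s[x, y],
\]
where the second equality merely restricts the sum to the support on which $\gamma_s$ is nonzero.

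Finally I would invoke the choice of coupling in the utility-optimal mechanism (Definition~\ref{def:opt-cp-mechanism}): it uses a coupling $\gamma_s \in \GammaEMD(\hlambda_s, \mu)$, that is, one that minimizes the $1$-Wasserstein transport cost between the input distribution $\hlambda_s$ and the target $\mu$. Hence the sum above attains the minimum over all couplings in $\cp{\hlambda_s}{\mu}$, which by Definition~\ref{def:p-Wasserstein-metric} with $p = 1$ is exactly $\EMDu(\hlambda_s, \mu)$, proving the claim.

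This argument is essentially a one-line calculation, so I do not expect a substantial obstacle. The only subtlety worth spelling out is the bookkeeping around the zero-probability inputs: the soundness of discarding the $\hlambda_s[x] = 0$ terms rests entirely on the marginal condition for $\gamma_s$, and this is the sole step I would state explicitly to keep the cancellation rigorous.
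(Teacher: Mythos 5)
Your proposal is correct and follows exactly the route the paper intends: the paper declares the result ``straightforward from the definition of the Earth mover's distance,'' having already noted that the loss equals $\sum_{(x_0,x_1)\in\supp(\gamma_s)}\utmetric(x_0,x_1)\gamma_s[x_0,x_1]$, which is precisely the cancellation you carry out before invoking $\gamma_s\in\GammaEMD(\hlambda_s,\mu)$. Your explicit treatment of the $\hlambda_s[x]=0$ terms via the marginal condition is a small but welcome addition of rigor that the paper leaves implicit.
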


The proof is straightforward from the definition of the Earth mover's distance.
Note that as mentioned in Section~\ref{sub:wasserstein-metric}, the coupling $\gamma_s\in\GammaEMD(\lambda_{s}, \mu)$ can be efficiently calculated by the North-West corner rule when $\utmetric$ is submodular.

Analogously, we could define a coupling mechanism that minimizes the maximum loss by using a coupling $\gamma_s\in\GammaInf(\lambda_{s}, \mu)$ for each $s\in\cals$.
Then the worst-case utility loss is given by the $\infty$-Wasserstein metric $\Winfu(\hlambda_s, \mu)$.

\section{Related work}
\label{sec:related-work}
Since the seminal work of Dwork~\cite{Dwork:06:ICALP} on 
differential privacy (\DP{}), 
a lot of its variants have been studied 
to provide different types of privacy guarantees~\cite{Desfontaines:19:arXiv}; 
e.g., 
$d$-privacy~\cite{Chatzikokolakis:13:PETS}, 
$f$-divergence privacy~\cite{Barthe:13:ICALP,Barber:14:arXiv}, 
mutual-information \DP{}~\cite{Cuff:16:CCS}, 
concentrated \DP{}~\cite{Dwork:16:arXiv},
R\'enyi \DP{}~\cite{Mironov:17:CSF}, 
Pufferfish privacy~\cite{Kifer:12:PODS}, 
Bayesian \DP{}~\cite{Yang:15:SIGMOD}, 
local \DP{}~\cite{Duchi:13:FOCS}, 
personalized \DP{}~\cite{Jorgensen:15:ICDE},
and
utility-optimized local \DP{}~\cite{Murakami:19:USENIX}.
All of these are intended to 
protect single input values 
instead of input distributions.

A few researches have explored the privacy of distributions.
Jelasity {\it et al}.~\cite{Jelasity:IHMMSec:14} propose distributional \DP{} to protect the privacy of distribution parameters $\theta$ in a Bayesian style (unlike \DP{} and \DistP{}).
Kawamoto {\it et al}.~\cite{Kawamoto:19:ESORICS} propose the \DistP{} notion in a \DP{} style.
Geumlek {\it et al}.~\cite{Geumlek:19:ISIT} propose profile-based privacy, a variant of \DistP{} that allows the mechanisms to depend on the perfect knowledge of input distributions.
However, these studies deal only with the worst-case risk, and neither relax them to the average-case risk (with divergence) nor allow them to use arbitrary auxiliary information (in spite that available information on input distributions is often approximate only).

There have been many studies (e.g.,~\cite{Xu:13:VLDB}) on the \emph{\DP{} of histogram publishing}, which is different from \DistP{} as follows.
Histogram publishing is a \emph{central} mechanism that \emph{hides a single record} $x\in\calx$ and outputs an obfuscated histogram, e.g., $\mu\in\Dists\caly$, 
whereas a \DistP{} mechanism is a \emph{local} mechanism that aims at \emph{hiding an input distribution} $\lambda\in\Dists\calx$ and outputs a single perturbed value $y\in\caly$.
As explained in~\cite{Kawamoto:19:ESORICS}, neither of these implies the other.

\section{Conclusion}
\label{sec:conclusion}
We introduced the notions of divergence \DistP{} and presented their useful theoretical properties in a general form.
By using probability coupling techniques, we presented how much divergence \DistP{} can be achieved by local obfuscation.
In particular, we proved that the perturbation noise should be added proportionally to the Earth mover's distance between the input distributions.
We also proposed a local mechanism called a (utility-optimal) coupling mechanism and theoretically evaluated their \DistP{} and utility loss in the presence of (exact or approximate) knowledge on the input distributions.

As for future work, we are planning to develop various kinds of coupling mechanisms for specific applications, such as location privacy.

\bibliographystyle{IEEEtran}
\bibliography{short}

\begin{thebibliography}{10}
\providecommand{\url}[1]{#1}
\csname url@rmstyle\endcsname
\providecommand{\newblock}{\relax}
\providecommand{\bibinfo}[2]{#2}
\providecommand\BIBentrySTDinterwordspacing{\spaceskip=0pt\relax}
\providecommand\BIBentryALTinterwordstretchfactor{4}
\providecommand\BIBentryALTinterwordspacing{\spaceskip=\fontdimen2\font plus
\BIBentryALTinterwordstretchfactor\fontdimen3\font minus
  \fontdimen4\font\relax}
\providecommand\BIBforeignlanguage[2]{{%
\expandafter\ifx\csname l@#1\endcsname\relax
\typeout{** WARNING: IEEEtran.bst: No hyphenation pattern has been}%
\typeout{** loaded for the language `#1'. Using the pattern for}%
\typeout{** the default language instead.}%
\else
\language=\csname l@#1\endcsname
\fi
#2}}

\bibitem{Dwork:06:ICALP}
C.~Dwork, ``Differential privacy,'' in \emph{Proc. {ICALP}}, 2006, pp. 1--12.

\bibitem{Duchi:13:FOCS}
J.~C. Duchi, M.~I. Jordan, and M.~J. Wainwright, ``Local privacy and
  statistical minimax rates,'' in \emph{Proc. {FOCS}}, 2013, pp. 429--438.

\bibitem{Andres:13:CCS}
M.~E. Andr{\'e}s, N.~E. Bordenabe, K.~Chatzikokolakis, and C.~Palamidessi,
  ``Geo-indistinguishability: differential privacy for location-based
  systems,'' in \emph{Proc. CCS}.\hskip 1em plus 0.5em minus 0.4em\relax ACM,
  2013, pp. 901--914.

\bibitem{Erlingsson_CCS14}
\'{U}lfar Erlingsson, V.~Pihur, and A.~Korolova, ``{RAPPOR}: Randomized
  aggregatable privacy-preserving ordinal response,'' in \emph{Proc. {CCS}},
  2014, pp. 1054--1067.

\bibitem{Jelasity:IHMMSec:14}
M.~Jelasity and K.~P. Birman, ``Distributional differential privacy for
  large-scale smart metering,'' in \emph{Proc. IH{\&}MMSec}, 2014, pp.
  141--146.

\bibitem{Kawamoto:19:ESORICS}
Y.~Kawamoto and T.~Murakami, ``Local obfuscation mechanisms for hiding
  probability distributions,'' in \emph{Proc. {ESORICS}}, 2019, pp. 128--148.

\bibitem{Geumlek:19:ISIT}
J.~Geumlek and K.~Chaudhuri, ``Profile-based privacy for locally private
  computations,'' \emph{CoRR}, vol. abs/1903.09084, 2019.

\bibitem{Barber:14:arXiv}
R.~F. Barber and J.~C. Duchi, ``Privacy and statistical risk: Formalisms and
  minimax bounds,'' \emph{CoRR}, vol. abs/1412.4451, 2014.

\bibitem{Cuff:16:CCS}
P.~Cuff and L.~Yu, ``Differential privacy as a mutual information constraint,''
  in \emph{Proc. {CCS}}, 2016, pp. 43--54.

\bibitem{Mironov:17:CSF}
I.~Mironov, ``R\'enyi differential privacy,'' in \emph{Proc. {CSF}}, 2017, pp.
  263--275.

\bibitem{Yang_TIST15}
D.~Yang, D.~Zhang, and B.~Qu, ``Participatory cultural mapping based on
  collective behavior data in location based social networks,'' \emph{ACM
  TIST}, vol.~7, no.~3, pp. 30:1--30:23, 2015.

\bibitem{Yang:19:TKDE}
D.~Yang, B.~Qu, and P.~Cudr{\'{e}}{-}Mauroux, ``Privacy-preserving social media
  data publishing for personalized ranking-based recommendation,'' \emph{{IEEE}
  Trans. Knowl. Data Eng.}, vol.~31, no.~3, pp. 507--520, 2019.

\bibitem{Chatzikokolakis:13:PETS}
K.~Chatzikokolakis, M.~E. Andr{\'e}s, N.~E. Bordenabe, and C.~Palamidessi,
  ``{Broadening the scope of Differential Privacy using metrics},'' in
  \emph{Proc. PETS}, 2013, pp. 82--102.

\bibitem{Alvim:18:CSF}
M.~S. Alvim, K.~Chatzikokolakis, C.~Palamidessi, and A.~Pazii, ``Invited paper:
  Local differential privacy on metric spaces: Optimizing the trade-off with
  utility,'' in \emph{Proc. {CSF}}, 2018, pp. 262--267.

\bibitem{CSISZAR:67:SSMH}
I.~Csiszar, ``Information measures of difference of probability distributions
  and indirect observations,'' \emph{Studia Sci. Math. Hungar.}, vol.~2, pp.
  299--318, 1967.

\bibitem{Kullback:51:AMS}
S.~Kullback and R.~A. Leibler, ``On information and sufficiency,'' \emph{Ann.
  Math. Statist.}, vol.~22, no.~1, pp. 79--86, 03 1951.

\bibitem{Vaserstein:69:PPI}
L.~{Vaserstein}, ``\BIBforeignlanguage{Russian}{{Markovian processes on
  countable space product describing large systems of automata.}}''
  \emph{\BIBforeignlanguage{Russian}{{Probl. Peredachi Inf.}}}, vol.~5, no.~3,
  pp. 64--72, 1969.

\bibitem{Hoffman:63:PMAMS}
A.~J. Hoffman, ``On simple linear programming problems,'' in \emph{Convexity:
  Proceedings of the Seventh Symposium in Pure Mathematics of the American
  Mathematical Society}, 1963, vol.~7, p. 317.

\bibitem{Kawamoto:17:LMCS}
Y.~Kawamoto, K.~Chatzikokolakis, and C.~Palamidessi, ``On the compositionality
  of quantitative information flow,'' \emph{Log. Methods Comput. Sci.},
  vol.~13, no.~3, 2017.

\bibitem{Barthe:13:ICALP}
G.~Barthe and F.~Olmedo, ``Beyond differential privacy: Composition theorems
  and relational logic for f-divergences between probabilistic programs,'' in
  \emph{Proc. {ICALP}}, ser. LNCS, vol. 7966, 2013, pp. 49--60.

\bibitem{Desfontaines:19:arXiv}
\BIBentryALTinterwordspacing
D.~Desfontaines and B.~Pej{\'{o}}, ``Sok: Differential privacies,''
  \emph{CoRR}, vol. abs/1906.01337, 2019. [Online]. Available:
  \url{http://arxiv.org/abs/1906.01337}
\BIBentrySTDinterwordspacing

\bibitem{Dwork:16:arXiv}
C.~Dwork and G.~N. Rothblum, ``Concentrated differential privacy,''
  \emph{CoRR}, vol. abs/1603.01887, 2016.

\bibitem{Kifer:12:PODS}
D.~Kifer and A.~Machanavajjhala, ``A rigorous and customizable framework for
  privacy,'' in \emph{Proc. {PODS}}, 2012, pp. 77--88.

\bibitem{Yang:15:SIGMOD}
B.~Yang, I.~Sato, and H.~Nakagawa, ``Bayesian differential privacy on
  correlated data,'' in \emph{Proc. {SIGMOD}}, 2015, pp. 747--762.

\bibitem{Jorgensen:15:ICDE}
Z.~Jorgensen, T.~Yu, and G.~Cormode, ``Conservative or liberal? {P}ersonalized
  differential privacy,'' in \emph{Proc. {ICDE'15}}, 2015, pp. 1023--1034.

\bibitem{Murakami:19:USENIX}
T.~Murakami and Y.~Kawamoto, ``Utility-optimized local differential privacy
  mechanisms for distribution estimation,'' in \emph{Proc. USENIX Security},
  2019, pp. 1877--1894.

\bibitem{Xu:13:VLDB}
J.~Xu, Z.~Zhang, X.~Xiao, Y.~Yang, G.~Yu, and M.~Winslett, ``Differentially
  private histogram publication,'' \emph{{VLDB} J.}, vol.~22, no.~6, pp.
  797--822, 2013.

\end{thebibliography}

\appendix

\subsection{Local Mechanisms for $\Df$-\DistP{}/\XDistP{}}
\label{sub:proofs:Df-DistP}

We first show the proofs for the $\Df$-\DistP{}/\XDistP{} achieved by local obfuscation mechanisms.
\vspace{1ex}

\fDistP*

\begin{proof}
Let $(\lambda_0, \lambda_1)\in\lift{\varPhi}$
and $\Gamma \eqdef \cp{\lambda_0}{\lambda_1}$.
\begin{align}
&
\fdiverge{\lift{\alg}(\lambda_0)}{\lift{\alg}(\lambda_1)} 
\nonumber
\\ =&
{\textstyle \sum_{y}}\,
\lift{\alg}(\lambda_1)[y]~
f\bigl({\textstyle \frac{ \lift{\alg}(\lambda_0)[y]}{ \lift{\alg}(\lambda_1)[y] } } \bigr)
\nonumber
\\ =&
{\textstyle \sum_{y}}\, 
{\textstyle \sum_{x_1}}\,
\lambda_1[x_1]\,\alg(x_1)[y]~
f\Bigl( {\textstyle \frac{ \sum_{x_0}\lambda_0[x_0]\,\alg(x_0)[y] }{ \sum_{x_1}\lambda_1[x_1]\,\alg(x_1)[y] } } \Bigr)
\nonumber
\\ =&
\min_{\gamma\in\Gamma}
{\textstyle \sum_{y,x_0,x_1}}\,
\gamma[x_0,x_1]\,\alg(x_1)[y]~
f\Bigl( {\textstyle\frac{ \sum_{x_0,x_1}\hspace{-0.5ex}\gamma[x_0,x_1]\,\alg(x_0)[y] }{\!\sum_{x_0,x_1}\hspace{-0.5ex}\gamma[x_0,x_1]\,\alg(x_1)[y] }} \Bigr)
\nonumber
\\[-0.4ex]
&\hspace{19ex}\text{(where $(x_0, x_1)$ ranges over $\supp(\gamma)$)}
\nonumber
\\ =&
\min_{\gamma\in\Gamma}
{\textstyle \sum_{y}}\,c~
f\Bigl( {\textstyle\frac{1}{c}} \,{\textstyle \sum_{x_0,x_1}}\,
\gamma[x_0,x_1]\,\alg(x_0)[y] \Bigr)
\nonumber
\\[-0.5ex]
&\hspace{23ex}\text{(where $c = \scriptstyle\sum_{x_0,x_1}\hspace{-0.5ex}\gamma[x_0,x_1]\,\alg(x_1)[y]$)}
\nonumber
\\ =&
\min_{\gamma\in\Gamma}
{\textstyle \sum_{y}}\,c~
f\bigl({\textstyle \sum_{x_0,x_1}}\,
{\textstyle
\frac{\!\gamma[x_0,x_1]\,\alg(x_1)[y] }{\!c }\cdot
\frac{\!\gamma[x_0,x_1]\,\alg(x_0)[y] }{\!\gamma[x_0,x_1]\,\alg(x_1)[y] }
}\bigr)
\nonumber
\\ \le&
\min_{\gamma\in\Gamma}
{\textstyle \sum_{y}}\,c\,
{\textstyle \sum_{x_0,x_1}}\,
{\textstyle
\frac{\!\gamma[x_0,x_1]\,\alg(x_1)[y] }{\!c }
}\cdot
f\bigl(
{\textstyle
\frac{\!\gamma[x_0,x_1]\, \alg(x_0)[y] }{\!\gamma[x_0,x_1]\,\alg(x_1)[y] }
}\bigr)
\nonumber
\\[-0.4ex]
&\hspace{10ex}\text{(by Jensen's inequality and the convexity of $f$)}
\nonumber
\\ =&
\min_{\gamma\in\Gamma}
{\textstyle \sum_{x_0,x_1}}\,
\gamma[x_0,x_1]\,
\sum_{y}\,
\alg(x_1)[y]\,\cdot
\!\displaystyle f\bigl( {\textstyle \frac{\alg(x_0)[y] }{\alg(x_1)[y]}} \bigr)
\nonumber
\\ =&
\min_{\gamma\in\Gamma}
{\textstyle \sum_{x_0,x_1}}\,
\gamma[x_0,x_1]\,
\fdiverge{\alg(x_0)}{\alg(x_1)}
{.}
\label{eq:point:f-divergence}
\end{align}

Assume that $\alg$ provides $(\varepsilon, \Df)$-\DP{} w.r.t. $\varPhi$.
By Definition~\ref{def:lifting-relations}, there is a coupling $\gamma\in\Gamma$ with $\supp(\gamma)\subseteq\varPhi$.
Then:
\begin{align*}
&
\fdiverge{\lift{\alg}(\lambda_0)}{\lift{\alg}(\lambda_1)} 
\nonumber
\\ =&
\min_{\gamma\in\Gamma}
{\textstyle \sum_{x_0,x_1}}\,
\gamma[x_0,x_1]\,
\fdiverge{\alg(x_0)}{\alg(x_1)}
\hspace{4.7ex}
\text{(by \eqref{eq:point:f-divergence})}
\\ \le&
\min_{\gamma\in\Gamma}
{\textstyle \sum_{x_0,x_1}}\,
\gamma[x_0,x_1]\,
\varepsilon
\\[-0.4ex]
&\hspace{5.5ex}
\text{(by $(x_0,x_1)\in\supp(\gamma) \subseteq \varPhi$ and $(\varepsilon, \Df)$-\DP{})}
\\ =&~
\varepsilon
{.}
\end{align*}
Hence $\alg$ provides $(\varepsilon, \Df)$-\DistP{} w.r.t. $\lift{\varPhi}$.
\end{proof}
\vspace{1ex}

\fXDPdist*

\begin{proof}
Assume that $\alg$ provides $(\varepsilon, \utmetric, \Df)$-\XDP{} w.r.t. $\varPhi$.
Let $(\lambda_0, \lambda_1)\in\liftemd{\varPhi}$.
By definition, there exists a coupling $\gamma\in\Gamma$ that satisfies $\supp(\gamma)\subseteq\varPhi$ and $\gamma\in\GammaEMD(\lambda_0, \lambda_1)$.
Then it follows from \eqref{eq:point:f-divergence} in the proof for Theorem~\ref{thm:fDistP} that:
\begin{align*}
&
\fdiverge{\lift{\alg}(\lambda_0)}{\lift{\alg}(\lambda_1)} 
\\ =&
\min_{\gamma\in\Gamma}
{\textstyle \sum_{x_0,x_1}}\,
\gamma[x_0,x_1]\,
\fdiverge{\alg(x_0)}{\alg(x_1)}
\hspace{4ex}
\text{(by \eqref{eq:point:f-divergence})}
\\ \le&
\min_{\gamma\in\Gamma}
{\textstyle \sum_{x_0,x_1}}\,
\gamma[x_0,x_1]\,
\varepsilon\, \utmetric(x_0, x_1)
\\[-0.3ex]
&\hspace{4ex}
\text{(by $(x_0,x_1)\in\supp(\gamma) \subseteq \varPhi$ and $(\varepsilon, \utmetric, \Df)$-\XDP{})}
\\ =&~
\varepsilon\, \sensemd(\lambda_0, \lambda_1)
{.}
\hspace{15ex}
\text{(by $\gamma\in\GammaEMD(\lambda_0, \lambda_1)$)}
\end{align*}
Hence $\alg$ provides $(\varepsilon, \sensemd, \Df)$-\XDistP{} w.r.t. $\liftemd{\varPhi}$.
\end{proof}

\subsection{Point Obfuscation by Distribution Obfuscation}
\label{sub:proofs:DistP-implies-DP}

Next we show that divergence \DP{} is an instance of divergence \DistP{} if an adjacency relation includes pairs of point distributions
(i.e., distributions having single points with probability $1$).
\vspace{1ex}

\arxiv{
\begin{lemma}\label{lem:lifting-includes-point-dist}
Let $p\in\realsone\cup\{\infty\}$ and $\varPhi\subseteq\calx\times\calx$.
For any $(x_0, x_1)\in\varPhi$, we have $(\eta_{x_0}, \eta_{x_1})\in\liftp{\varPhi}$.
\end{lemma}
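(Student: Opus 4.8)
The plan is to exhibit the unique coupling of the two point distributions and verify that it satisfies both requirements in the definition of $\liftp{\varPhi}$. Write $\eta_{x_0}, \eta_{x_1}\in\Dists\calx$ for the point distributions determined by $\eta_{x_0}[x_0] = \eta_{x_1}[x_1] = 1$. The first step I would take is to observe that $\cp{\eta_{x_0}}{\eta_{x_1}}$ is a singleton: any coupling $\gamma$ has first marginal $\eta_{x_0}$ and second marginal $\eta_{x_1}$, so $\sum_{x_1'}\gamma[x_0, x_1'] = 1$ and $\sum_{x_0'}\gamma[x_0', x_1] = 1$, while all remaining marginal sums vanish. Since the total mass is $1$, this forces $\gamma[x_0, x_1] = 1$ and $\gamma[a,b] = 0$ for every $(a,b)\neq(x_0,x_1)$; hence there is exactly one coupling.

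Next I would check the two conditions of Definition (the one introducing $\liftp{\varPhi}$) for this $\gamma$. For the support condition, $\supp(\gamma) = \{(x_0, x_1)\}$, and since $(x_0, x_1)\in\varPhi$ by hypothesis, we get $\supp(\gamma)\subseteq\varPhi$. For the optimality condition, because $\gamma$ is the \emph{only} element of $\cp{\eta_{x_0}}{\eta_{x_1}}$, it is trivially the minimizer of the transportation cost, so $\gamma\in\GammaP(\eta_{x_0}, \eta_{x_1})$; note this argument makes no reference to the value of $p$, so it covers $p=\infty$ uniformly.

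Finally, since $\liftp{\varPhi}$ is by definition the maximum relation over $\Dists\calx$ whose pairs admit a coupling satisfying precisely these two conditions, the coupling $\gamma$ just exhibited witnesses $(\eta_{x_0}, \eta_{x_1})\in\liftp{\varPhi}$, which is the claim. There is essentially no obstacle here: the only thing to observe is that between two point masses the coupling is unique, after which both the support-inclusion and the Wasserstein-optimality requirements hold for free. The single point requiring slight care is to handle $p=\infty$ on the same footing as finite $p$, which the uniqueness argument does automatically.
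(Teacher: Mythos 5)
Your proof is correct: the paper states this lemma without proof (treating it as immediate), and your argument — that the coupling of two point masses is unique, hence automatically optimal for every $p$ and supported on $\{(x_0,x_1)\}\subseteq\varPhi$ — is exactly the reasoning the lemma implicitly relies on. No gaps.
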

\vspace{1ex}
}

\begin{restatable}[\DistP{} $\Rightarrow$ \DP{} and \XDistP{} $\Rightarrow$ \XDP{}]{thm}{DPfromDistP}
\label{thm:DPfromDistP}
Let $\varepsilon\in\realsnng$,\, $p\in\realsone\cup\{\infty\}$,\, $D\in\Div{\caly}$, $\varPhi\subseteq\calx\times\calx$, and $\alg:\calx\rightarrow\Dists\caly$ be a randomized algorithm.
\begin{enumerate}
\item
If $\alg$ provides $(\varepsilon, D)$-\DistP{} w.r.t. $\lift{\varPhi}$, then
it provides $(\varepsilon, D)$-\DP{} w.r.t. $\varPhi$.
\item
If $\alg$ provides $(\varepsilon, \Wpu, D)$-\XDistP{} w.r.t. $\liftp{\varPhi}$, then
it provides $(\varepsilon, \utmetric, D)$-\XDP{} w.r.t. $\varPhi$.
\end{enumerate}
\end{restatable}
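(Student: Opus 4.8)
The plan is to reduce each implication to its point-distribution instances, exploiting the fact that a \DP{}/\XDP{} adjacency relation $\varPhi$ sits inside its own lifting once every point $x$ is identified with the Dirac distribution $\eta_x$ (the point distribution with $\eta_x[x]=1$). Three elementary facts drive the whole argument. First, lifting sends a point distribution back to the algorithm itself: $\lift{\alg}(\eta_x)=\alg(x)$, which is immediate from the definition of lifting, since $\lift{\alg}(\eta_x)[R]=\sum_{x'}\eta_x[x']\,\alg(x')[R]=\alg(x)[R]$. Second, the only coupling of two point distributions is concentrated on a single atom. Third, the $p$-Wasserstein distance between two point distributions equals the ground distance, $\Wpu(\eta_{x_0},\eta_{x_1})=\utmetric(x_0,x_1)$.

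First I would establish Lemma~\ref{lem:lifting-includes-point-dist}: if $(x_0,x_1)\in\varPhi$ then $(\eta_{x_0},\eta_{x_1})\in\liftp{\varPhi}$, and a fortiori $(\eta_{x_0},\eta_{x_1})\in\lift{\varPhi}$. Because the first marginal of any $\gamma\in\cp{\eta_{x_0}}{\eta_{x_1}}$ places all mass at $x_0$ and the second marginal all mass at $x_1$, the marginal constraints of Definition~\ref{def:coupling} force $\gamma[x_0,x_1]=1$ with all other entries zero. Hence this coupling is unique and $\supp(\gamma)=\{(x_0,x_1)\}\subseteq\varPhi$. Being the unique coupling, it trivially attains the minimum transport cost, so $\gamma\in\GammaP(\eta_{x_0},\eta_{x_1})$; this is exactly the extra condition needed for membership in the restrictive relation $\liftp{\varPhi}$, while dropping it gives membership in $\lift{\varPhi}$.

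For part (1), I would take any $(x_0,x_1)\in\varPhi$. The lemma gives $(\eta_{x_0},\eta_{x_1})\in\lift{\varPhi}$, so $(\varepsilon,D)$-\DistP{} yields $\diverge{\alg(x_0)}{\alg(x_1)}=\diverge{\lift{\alg}(\eta_{x_0})}{\lift{\alg}(\eta_{x_1})}\le\varepsilon$; since the adjacency relation is symmetric, the pair $(x_1,x_0)\in\varPhi$ supplies the reverse bound $\diverge{\alg(x_1)}{\alg(x_0)}\le\varepsilon$, and the two together are precisely $(\varepsilon,D)$-\DP{} in the sense of Definition~\ref{def:div-DP}. For part (2), I would again take $(x_0,x_1)\in\varPhi$, obtain $(\eta_{x_0},\eta_{x_1})\in\liftp{\varPhi}$ from the lemma, and compute the transport cost over the unique coupling to get $\Wpu(\eta_{x_0},\eta_{x_1})=\bigl(\utmetric(x_0,x_1)^p\bigr)^{1/p}=\utmetric(x_0,x_1)$ (and likewise for $p=\infty$). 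Then $(\varepsilon,\Wpu,D)$-\XDistP{} gives $\diverge{\alg(x_0)}{\alg(x_1)}\le\varepsilon\,\Wpu(\eta_{x_0},\eta_{x_1})=\varepsilon\,\utmetric(x_0,x_1)$, which is exactly $(\varepsilon,\utmetric,D)$-\XDP{} w.r.t.~$\varPhi$ by Definition~\ref{def:div-XDP}; no symmetry assumption is needed here because \XDP{} is one-directional.

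The argument is essentially bookkeeping, so the only steps that require care are (a) verifying that the single-atom coupling is genuinely the \emph{unique} element of $\cp{\eta_{x_0}}{\eta_{x_1}}$ and therefore automatically optimal, so that membership in $\liftp{\varPhi}$ (not merely in $\lift{\varPhi}$) holds, and (b) reconciling the two-sided divergence requirement of \DP{} in part (1) with the one-sided \DistP{} hypothesis, which I resolve through symmetry of the adjacency relation. Neither is deep, but skipping (a) would leave a genuine gap, since $\liftp{\varPhi}$ is strictly smaller than $\lift{\varPhi}$ for non-degenerate distributions; it is precisely the triviality of optimality for point masses that lets the stronger $p$-Wasserstein lifting carry the \XDP{} conclusion.
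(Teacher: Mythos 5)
Your argument is correct and follows essentially the same route as the paper's own proof: reduce to Dirac distributions $\eta_x$ via the lemma that $(x_0,x_1)\in\varPhi$ implies $(\eta_{x_0},\eta_{x_1})\in\liftp{\varPhi}\subseteq\lift{\varPhi}$, use $\lift{\alg}(\eta_x)=\alg(x)$, and note $\Wpu(\eta_{x_0},\eta_{x_1})=\utmetric(x_0,x_1)$. You additionally spell out the uniqueness-of-the-coupling argument behind the lemma (which the paper states without proof) and the two-sided divergence requirement of \DP{}; both are welcome clarifications rather than deviations.
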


\conference{See~\cite{Kawamoto:Allerton19:arXiv} for the proof.}

\arxiv{
\begin{proof}
We show the first claim as follows.
Assume that $\alg$ provides $(\varepsilon, D)$-\DistP{} w.r.t. $\lift{\varPhi}$.
Let $(x_0, x_1)\in\varPhi$, and
$\eta_{x_0}$ and $\eta_{x_1}$ be the point distributions.
By Lemma~\ref{lem:lifting-includes-point-dist} and $\liftp{\varPhi}\subseteq\lift{\varPhi}$, we have $(\eta_{x_0}, \eta_{x_1})\in\lift{\varPhi}$.
By $(\varepsilon, D)$-\DistP{}, 
we obtain
$\diverge{\alg(x_0)\!}{\!\alg(x_1)} =
\diverge{\lift{\alg}(\eta_{x_0})\!}{\!\lift{\alg}(\eta_{x_1})}
\le \varepsilon$.
Hence $\alg$ provides $(\varepsilon, D)$-\DP{} w.r.t.~$\varPhi$.

Next we show the second claim.
Assume that $\alg$ provides $(\varepsilon, \Wpu, D)$-\XDistP{} w.r.t. $\liftp{\varPhi}$.
Let $(x_0, x_1)\in\varPhi$, and
$\eta_{x_0}$ and $\eta_{x_1}$ be the point distributions. 
By Lemma~\ref{lem:lifting-includes-point-dist}, we have $(\eta_{x_0}, \eta_{x_1})\in\liftp{\varPhi}$.
Then we obtain:
\begin{align*}
\diverge{\alg(x_0)\!}{\!\alg(x_1)}
&=\,%&
\diverge{\lift{\alg}(\eta_{x_0})\!}{\!\lift{\alg}(\eta_{x_1})}
\\
&\le\,%&
\varepsilon \Wpu(\eta_{x_0}, \eta_{x_1})
~~~\text{(by \XDistP{} of $\alg$)}
\\
&=\,%&
\varepsilon \utmetric(x_0, x_1)
{,}
\end{align*}
where the last equality follows from the definition of $\Wpu$.
Hence $\alg$ provides $(\varepsilon, \utmetric, D)$-\XDP{} w.r.t.~$\varPhi$.
\end{proof}
}

\arxiv{
\subsection{Privacy and Utility of Coupling Mechanisms}
\label{sub:proof:cp}

Next, we show the privacy of the coupling mechanisms.
\vspace{1ex}

\firstCouplingMaxDistP*
\vspace{1ex}

\begin{proof}
Let $((s_0, \lambda_{s_0}), (s_1, \lambda_{s_1})) \in\varPsi$, and $R\subseteq\caly$.
When $\CP$ is applied to $\lambda_{s_0}$ the output distribution is given by:
\begin{align*}
\!\lift{\CP}\!(s_0,\lambda_{s_0})[R]
&=
\sum_{x \in \calx} \lambda_{s_0}[x] \cdot \textstyle\frac{\gamma_{s_0}[x, R]}{\hlambda_{s_0}[x]}
\\ &\le
e^{\varepsilon}\!\sum_{x \in \calx}\!\gamma_{s_0}[x, R]
\hspace{1ex}
\text{(by $\maxdiverge{\lambda_{s_0}\!}{\!\hlambda_{s_0}} \le \varepsilon$)}
\\ &=
e^{\varepsilon} \mu[R]
{.}
\end{align*}
When $\CP$ is applied to $\lambda_{s_1}$ the output distribution is:
\begin{align*}
\!\lift{\CP}\!(s_1,\lambda_{s_1})[R]
&=
\sum_{x \in \calx} \lambda_{s_1}[x] \cdot \textstyle\frac{\gamma_{s_1}[x, R]}{\hlambda_{s_1}[x]}
\\ &\ge
e^{-\varepsilon}\!\sum_{x \in \calx}\!\gamma_{s_1}[x, R]
\hspace{1ex}
\text{(by $\maxdiverge{\hlambda_{s_1}\!}{\!\lambda_{s_1}} \le \varepsilon$)}
\\ &=
e^{-\varepsilon} \mu[R]
{.}
\end{align*}
Hence $\textstyle\frac{\lift{\CP}(s_0,\lambda_0)[R]}{\lift{\CP}(s_1,\lambda_1)[R]} \le e^{2\varepsilon}$.
Therefore $\CP$ provides $(2\varepsilon, \Dinf)$-\DistP{} w.r.t. $\varPsi$.

Next the KL-divergence is given by:
\begin{align*}
&\eqspace
\KLdiverge{ \lift{\CP}(s_0,\lambda_{s_0}) \!}{\! \lift{\CP}(s_1,\lambda_{s_1}) } 
\\ &=
\sup_{y} \lift{\CP}(s_0,\lambda_{s_0})[y]\cdot\,
\ln \left( \textstyle\frac{ \lift{\CP}(s_0,\lambda_{s_0})[y] }{ \lift{\CP}(s_1,\lambda_{s_1})[y] } \right)
\\ &\le
e^{\varepsilon} \sup_{y} \mu[y]\,
\ln \left( e^{2\varepsilon} \right)
\\ &\le
2\varepsilon\, e^{\varepsilon}
{.}
\end{align*}
Therefore $\CP$ provides $(2\varepsilon\, e^{\varepsilon}, \Df)$-\DistP{} w.r.t. $\varPsi$.

Finally, the $f$-divergence is given by:
\begin{align*}
&\eqspace
\fdiverge{ \lift{\CP}(s_0,\lambda_{s_0}) \!}{\! \lift{\CP}(s_1,\lambda_{s_1}) } 
\\ &=
\sup_{y} \lift{\CP}(s_1,\lambda_{s_1})[y]\cdot
f\left( \textstyle\frac{ \lift{\CP}(s_0,\lambda_{s_0})[y] }{ \lift{\CP}(s_1,\lambda_{s_1})[y] } \right)
\\ &\le
e^{\varepsilon} \sup_{y} \mu[y]
f\left( e^{2\varepsilon} \right)
\\ &\le
e^{\varepsilon} f(e^{2\varepsilon})
{.}
\end{align*}
Therefore $\CP$ provides $(e^{\varepsilon} f( e^{2\varepsilon}), \Df)$-\DistP{} w.r.t. $\varPsi$.
\end{proof}

\subsection{Sequential Composition $\Seq$ with Shared Input}
\label{sub:composition:Seq:details}

We first recall the definition of the sequential composition $\Seq$ with shared input (Fig.~\ref{fig:sequential-shared}) in previous work.

\begin{definition}[Sequential composition $\Seq$]\label{def:seq}\rm
Given two randomized algorithms $\alg_0:\calx\rightarrow\Dists\caly_0$ and $\alg_1:\caly_0\allowbreak\times\calx\rightarrow\Dists\caly_1$, we define the \emph{sequential composition} of $\alg_0$ and $\alg_1$ as the randomized algorithm $\alg_1 \Seq \alg_0: \calx\rightarrow\Dists\caly_1$ such that for any $x\in\calx$,\,
$(\alg_1 \Seq \alg_0)(x) = \alg_1(\alg_0(x), x))$.
\end{definition}
\vspace{1ex}

Then we present the compositionality of $\DKL$-\DistP{}.
Note that since this composition is adaptive, the compositionality does not hold in general for $f$-divergence.

\begin{restatable}[Sequential composition $\Seq$ of $\DKL$-\DistP{}]{prop}{CompositionKL}\label{prop:Composition:KL}
Let 
$\varPhi \subseteq \calx\times\calx$.
If $\alg_0:\calx \rightarrow\Dists\caly_0$ provides $(\varepsilon_0, \DKL)$-\DistP{} w.r.t. $\lift{\varPhi}$ 
and for each $y_0\in\caly_0$, $\alg_1(y_0):\calx \rightarrow\Dists\caly_1$ provides $(\varepsilon_1, \DKL)$-\DistP{} w.r.t. $\lift{\varPhi}$, the sequential composition $\alg_1 \Seq \alg_0$ provides $(\varepsilon_0+\varepsilon_1, \DKL)$-\DistP{} w.r.t.~$\lift{\varPhi}$.
\end{restatable}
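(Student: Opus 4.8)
The plan is to reduce this adaptive, distribution-level composition to a point-level (\DP{}-style) composition, where the Kullback--Leibler chain rule applies cleanly, and then to lift the result back to the distribution level via Theorem~\ref{thm:fDistP}. First I would extract point-level KL-privacy from the hypotheses: since $\alg_0$ is $(\varepsilon_0, \DKL)$-\DistP{} w.r.t. $\lift{\varPhi}$, Theorem~\ref{thm:DPfromDistP}(1) shows $\alg_0$ is $(\varepsilon_0, \DKL)$-\DP{} w.r.t. $\varPhi$, and likewise each $\alg_1(y_0)$ is $(\varepsilon_1, \DKL)$-\DP{} w.r.t. $\varPhi$. Equivalently, these forward KL bounds follow directly by evaluating \DistP{} on the point distributions $(\eta_{x_0}, \eta_{x_1})\in\lift{\varPhi}$ guaranteed by Lemma~\ref{lem:lifting-includes-point-dist}, which sidesteps any concern about the two-sidedness of Definition~\ref{def:div-DP}.

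Second, I would establish the point-level composition: for every $(x_0, x_1)\in\varPhi$,
\[
\KLdiverge{(\alg_1 \Seq \alg_0)(x_0)}{(\alg_1 \Seq \alg_0)(x_1)} \le \varepsilon_0 + \varepsilon_1 .
\]
Writing the joint output law on input $x$ as $P_x(y_0, y_1) = \alg_0(x)[y_0]\,\alg_1(y_0, x)[y_1]$, the KL chain rule gives
\[
\KLdiverge{P_{x_0}}{P_{x_1}} = \KLdiverge{\alg_0(x_0)}{\alg_0(x_1)} + \sum_{y_0} \alg_0(x_0)[y_0]\,\KLdiverge{\alg_1(y_0, x_0)}{\alg_1(y_0, x_1)} .
\]
The first summand is at most $\varepsilon_0$, each inner divergence is at most $\varepsilon_1$, and the weights $\alg_0(x_0)[y_0]$ sum to $1$, so the joint divergence is at most $\varepsilon_0 + \varepsilon_1$. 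Since the genuine output $y_1$ is the projection of $(y_0, y_1)$ (post-processing), the data-processing inequality for KL yields the displayed bound, so $\alg_1 \Seq \alg_0$ is $(\varepsilon_0 + \varepsilon_1, \DKL)$-\DP{} w.r.t. $\varPhi$. Third, I would apply Theorem~\ref{thm:fDistP} with $f(t) = t\log t$ to the single mechanism $\alg_1 \Seq \alg_0$, turning its $(\varepsilon_0+\varepsilon_1,\DKL)$-\DP{} w.r.t. $\varPhi$ into $(\varepsilon_0+\varepsilon_1,\DKL)$-\DistP{} w.r.t. $\lift{\varPhi}$, which is exactly the claim. Equivalently, one may inline this last step by instantiating inequality~\eqref{eq:point:f-divergence} with any coupling $\gamma\in\cp{\lambda}{\lambda'}$ whose support lies in $\varPhi$ and bounding each term $\KLdiverge{(\alg_1\Seq\alg_0)(x_0)}{(\alg_1\Seq\alg_0)(x_1)}$ by $\varepsilon_0+\varepsilon_1$.

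The main obstacle, and the reason for this detour, is that the chain rule cannot be applied directly at the distribution level. Conditioning the joint law $\lift{(\alg_1\Seq\alg_0)}(\lambda)$ on the intermediate output $y_0$ produces a Bayesian posterior $\lambda_{|y_0}$ over inputs, and there is no guarantee that $(\lambda_{|y_0}, \lambda'_{|y_0})$ still belongs to $\lift{\varPhi}$, so the adaptive hypothesis on $\alg_1$ cannot be invoked against the posteriors. Descending to point distributions removes this difficulty, because conditioning a point-input process on $y_0$ leaves the (degenerate) input law unchanged, making the conditional of $y_1$ exactly $\alg_1(y_0, x_0)$ with no posterior reweighting; the coupling is then reintroduced only at the very end, through Theorem~\ref{thm:fDistP}. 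One minor convention to flag is that KL corresponds to $f(t)=t\log t$, which is negative on $(0,1)$ yet is treated throughout the paper as an $f$-divergence; I would rely on that same convention so that Theorems~\ref{thm:DPfromDistP} and~\ref{thm:fDistP} remain applicable here.
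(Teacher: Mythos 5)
Your proposal follows essentially the same route as the paper's own proof: descend to point-level $(\varepsilon,\DKL)$-\DP{} via Theorem~\ref{thm:DPfromDistP}, compose at the point level using the KL chain rule on the joint law of $(y_0,y_1)$ with the first term bounded by $\varepsilon_0$ and the conditional terms by $\varepsilon_1$, and lift back to \DistP{} w.r.t.~$\lift{\varPhi}$ via Theorem~\ref{thm:fDistP}. The one (welcome) refinement is that you invoke the data-processing inequality explicitly to pass from the joint KL to the marginal on $y_1$, a step the paper's derivation writes as an equality.
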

\vspace{1ex}

\begin{proof}
By Theorem~\ref{thm:DPfromDistP} in Appendix~\ref{sub:proofs:DistP-implies-DP}, $\alg_0$ provides $(\varepsilon_0, \DKL)$-\DP{} w.r.t. $\varPhi$, 
and for each $y_0\in\caly_0$, $\alg_1(y_0)$ provides $(\varepsilon_1, \DKL)$-\DP{} w.r.t. $\varPhi$.
Let 
$(x, x')\in\varPhi$. Then:
\begin{align*}
&\eqspace
\KLdiverge{(\alg_1 \Seq \alg_0)(x)}{(\alg_1 \Seq \alg_0)(x')}
\\ &=
\sum_{y_1}
(\alg_1 \Seq \alg_0)(x)[y_1]
\ln{\textstyle\frac{ (\alg_1 \Seq \alg_0)(x)[y_1] }{ (\alg_1 \Seq \alg_0)(x')[y_1] }}
\\ &=
\sum_{y_0,y_1}
\alg_0(x)[y_0]\cdot\alg_1(y_0,x)[y_1]
\ln
{\textstyle\frac{\alg_0(x)[y_0]\cdot\alg_1(y_0,x)[y_1]}
     {\alg_0(x')[y_0]\cdot\alg_1(y_0,x')[y_1]}}
\\ &=
\sum_{y_0}
\alg_0(x)[y_0]
\ln
{\textstyle\frac{\alg_0(x)[y_0]}{\alg_0(x')[y_0]}}
\\[0.5ex]
&\eqspace +
\sum_{y_0,y_1}
\alg_0(x)[y_0]
\alg_1(y_0, x)[y_1]
\ln
{\textstyle\frac{\alg_1(y_0,x)[y_1]}{\alg_1(y_0,x')[y_1]}}
\\ &\le
\KLdiverge{\alg_0(x)\!}{\!\alg_0(x')}
\\[-0.5ex]
&\eqspace 
+
\max_{y_0}\!
\sum_{y_1}\!
\alg_1(y_0,x)[y_1]
\ln
{\textstyle\frac{\alg_1(y_0,x)[y_1]}
     {\alg_1(y_0,x')[y_1]}}
\\ &=
\KLdiverge{\!\alg_0(x)\!}{\!\alg_0(x')}
+
\max_{y_0}\KLdiverge{\!\alg_1(y_0,x)\!}{\!\alg_1(y_0,x')}
\\[-0.5ex] &\le
\varepsilon_0 + \varepsilon_1
{.}
\end{align*}
Hence $\alg_1 \Seq \alg_0$ provides $(\varepsilon_0+\varepsilon_1, \DKL)$-\DP{} w.r.t. $\varPhi$.
By Theorem~\ref{thm:fDistP}, $\alg_1 \Seq \alg_0$ provides $(\varepsilon_0+\varepsilon_1, \DKL)$-\DistP{} w.r.t.~$\lift{\varPhi}$.
\end{proof}
\vspace{1ex}

\begin{restatable}[Sequential composition $\Seq$ of $\DKL$-\XDistP{}]{prop}{CompositionKLXDistP}\label{prop:Composition:KL:XDistP}
Let 
$\utmetric$ be a metric over $\calx$, and
$\varPhi\subseteq\calx\times\calx$.
If $\alg_0:\calx\rightarrow\Dists\caly_0$ provides $(\varepsilon_0,  \sensemd, \DKL)$-\XDistP{} w.r.t. $\liftemd{\varPhi}$ 
and for each $y_0\in\caly_0$, $\alg_1(y_0): \calx\rightarrow\Dists\caly_1$ provides $(\varepsilon_1, \sensemd, \DKL)$-\XDistP{} w.r.t. $\liftemd{\varPhi}$ then the sequential composition $\alg_1 \Seq \alg_0$ provides $(\varepsilon_0+\varepsilon_1, \sensemd, \allowbreak \DKL)$-\XDistP{} w.r.t.~$\liftemd{\varPhi}$.
\end{restatable}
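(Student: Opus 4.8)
The plan is to reduce this extended, metric compositionality to a point-level \XDP{} statement, prove the point-level result via the chain rule for KL-divergence, and then lift back using the $\Df$-machinery already established in the paper. This mirrors the strategy of the non-metric version in Proposition~\ref{prop:Composition:KL}, replacing \DP{} by \XDP{} throughout and carrying the metric factor $\utmetric(x,x')$ along.

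First I would pass from \XDistP{} to point-level \XDP{}. Applying Theorem~\ref{thm:DPfromDistP}(2) with $p=1$ (so that $\Wpu = \EMDu = \sensemd$ and $\liftp{\varPhi} = \liftemd{\varPhi}$), the hypothesis that $\alg_0$ is $(\varepsilon_0,\sensemd,\DKL)$-\XDistP{} w.r.t.\ $\liftemd{\varPhi}$ yields that $\alg_0$ is $(\varepsilon_0,\utmetric,\DKL)$-\XDP{} w.r.t.\ $\varPhi$, and likewise each $\alg_1(y_0)$ is $(\varepsilon_1,\utmetric,\DKL)$-\XDP{} w.r.t.\ $\varPhi$. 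This strips away the Wasserstein/coupling layer, leaving a plain metric bound of the form $\KLdiverge{\alg(x)}{\alg(x')}\le\varepsilon\,\utmetric(x,x')$ for each adjacent pair.

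Next, I would fix $(x,x')\in\varPhi$ and expand the composed KL-divergence by the standard chain rule. Writing the joint output under input $x$ as $\alg_0(x)[y_0]\cdot\alg_1(y_0,x)[y_1]$, the chain rule gives
\begin{align*}
\KLdiverge{(\alg_1\Seq\alg_0)(x)}{(\alg_1\Seq\alg_0)(x')}
&= \KLdiverge{\alg_0(x)}{\alg_0(x')} \\
&\eqspace + \sum_{y_0}\alg_0(x)[y_0]\,\KLdiverge{\alg_1(y_0,x)}{\alg_1(y_0,x')}.
\end{align*}
The first term is at most $\varepsilon_0\,\utmetric(x,x')$ by the \XDP{} of $\alg_0$, and every summand of the averaged term is at most $\varepsilon_1\,\utmetric(x,x')$ by the \XDP{} of $\alg_1(y_0)$, so the convex combination is bounded by $\varepsilon_1\,\utmetric(x,x')$ as well. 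Adding these shows that $\alg_1\Seq\alg_0$ is $(\varepsilon_0+\varepsilon_1,\utmetric,\DKL)$-\XDP{} w.r.t.\ $\varPhi$.

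Finally I would lift back: since $\DKL$ is an $f$-divergence (with $f(t)=t\log t$), Theorem~\ref{thm:f-XDP-dist} converts the point-level $(\varepsilon_0+\varepsilon_1,\utmetric,\DKL)$-\XDP{} into $(\varepsilon_0+\varepsilon_1,\sensemd,\DKL)$-\XDistP{} w.r.t.\ $\liftemd{\varPhi}$, which is exactly the claim. The one genuinely delicate point is that the composition is \emph{adaptive} (the mechanism $\alg_1$ sees the output $y_0$ of $\alg_0$), so the additive bound cannot come from generic $f$-divergence compositionality, which fails here; it relies specifically on the KL chain rule, whose averaged second term is precisely what lets the $\varepsilon_1$ contribution survive adaptivity. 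Everything else---the choice $p=1$ so that the relevant Wasserstein metric is $\sensemd$, and the consistency of the adjacency relations $\varPhi$ and $\liftemd{\varPhi}$ across the two conversions---is routine bookkeeping that follows the template of Proposition~\ref{prop:Composition:KL}.
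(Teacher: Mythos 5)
Your proposal is correct and is essentially the paper's own argument: the paper proves this proposition simply by declaring it ``analogous to the proof of Proposition~\ref{prop:Composition:KL},'' and your write-up is precisely that adaptation --- descend to point-level \XDP{} via Theorem~\ref{thm:DPfromDistP} with $p=1$, apply the KL chain rule carrying the factor $\utmetric(x,x')$, and lift back via Theorem~\ref{thm:f-XDP-dist}. The only (immaterial) difference is that you bound the conditional term by the $\alg_0(x)$-average of the per-$y_0$ KL divergences rather than by their maximum over $y_0$ as in the paper's template.
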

\vspace{1ex}

\begin{proof}
Analogous to the proof 
for Proposition~\ref{prop:Composition:KL}.
\end{proof}

\subsection{Sequential Composition $\liftSeq$ with Independent Sampling}
\label{sub:composition:liftSeq:details}

In this section we present the compositionality with independent sampling,
which is defined as follows.

\begin{definition}[Sequential composition $\liftSeq$]\label{def:seq:lift}\rm
Given two randomized algorithms $\alg_0:\calx\rightarrow\Dists\caly_0$ and $\alg_1:\caly_0\times\calx\rightarrow\Dists\caly_1$, we define the \emph{sequential composition} of $\alg_0$ and $\alg_1$ as the randomized algorithm $\alg_1 \liftSeq \alg_0: \calx\times\calx\rightarrow\Dists\caly_1$ such that: for any $x_0, x_1\in\calx$,\,
$(\alg_1 \liftSeq \alg_0)(x_0, x_1) = \alg_1(\alg_0(x_0), x_1))$.
\end{definition}
\vspace{1ex}
We define an operator $\diamond$ between binary relations $\varPsi_0$ and~$\varPsi_1$:
\[
\varPsi_0 \mathbin{\diamond} \varPsi_1 =
\{ (\lambda_0\times\lambda_1, \lambda'_0\times\lambda'_1) \,|\,
   (\lambda_0,\lambda'_0)\in\varPsi_0, (\lambda_1,\lambda'_1)\in\varPsi_1 \}.
\]

Now we show the compositionality for $\DKL$-\DistP{}.

\begin{restatable}[Sequential composition $\liftSeq$ of $\DKL$-\DistP{}]{prop}{CompositionKLlift}\label{prop:Composition:KLlift}
Let 
$\varPsi \subseteq \Dists\calx\times\Dists\calx$.
If $\alg_0:\calx \rightarrow\Dists\caly_0$ provides $(\varepsilon_0, \DKL)$-\DistP{} w.r.t. $\varPsi$ 
and for each $y_0\in\caly_0$, $\alg_1(y_0):\calx \rightarrow\Dists\caly_1$ provides $(\varepsilon_1, \DKL)$-\DistP{} w.r.t. $\varPsi$,
then the composition $\alg_1 \liftSeq \alg_0$ provides $(\varepsilon_0+\varepsilon_1, \DKL)$-\DistP{} w.r.t. $\varPsi \diamond \varPsi$.
\end{restatable}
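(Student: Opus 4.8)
The plan is to exploit the product structure of the inputs to reduce the statement to the chain rule for KL divergence, in a spirit analogous to Proposition~\ref{prop:Composition:KL} for $\Seq$. Fix a pair $(\lambda_0\times\lambda_1,\, \lambda'_0\times\lambda'_1)\in\varPsi\diamond\varPsi$, so that by definition $(\lambda_0,\lambda'_0)\in\varPsi$ and $(\lambda_1,\lambda'_1)\in\varPsi$. First I would track the \emph{joint} law of the intermediate pair $(y_0,y_1)\in\caly_0\times\caly_1$ produced by the composition, before the final projection onto $y_1$. Writing $P,Q\in\Dists(\caly_0\times\caly_1)$ for these joint laws under $\lambda_0\times\lambda_1$ and $\lambda'_0\times\lambda'_1$ respectively, the crucial point is that, since $x_0$ and $x_1$ are sampled independently, the sum over $(x_0,x_1)$ factors once $y_0$ is held fixed:
\begin{align*}
P[y_0,y_1] &=
\sum_{x_0,x_1}\lambda_0[x_0]\,\lambda_1[x_1]\,\alg_0(x_0)[y_0]\,\alg_1(y_0,x_1)[y_1]
\\ &=
\lift{\alg_0}(\lambda_0)[y_0]\cdot\lift{\alg_1(y_0)}(\lambda_1)[y_1]
{,}
\end{align*}
and analogously for $Q$ with the primed distributions. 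This is precisely a marginal-times-conditional factorization: $\lift{\alg_0}(\lambda_0)$ is the law of $y_0$, and $\lift{\alg_1(y_0)}(\lambda_1)$ is the conditional law of $y_1$ given $y_0$.

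Next I would note that $\alg_1\liftSeq\alg_0$ is simply the post-processing of this joint mechanism by the projection $(y_0,y_1)\mapsto y_1$, so that $\lift{(\alg_1\liftSeq\alg_0)}(\lambda_0\times\lambda_1)$ is the $y_1$-marginal of $P$. Since KL divergence is an $f$-divergence, the data processing inequality (the monotonicity of $f$-divergences under post-processing that underlies the post-processing row of Table~\ref{table:summary:basic-properties:DistP:f}) gives
\begin{align*}
\KLdiverge{\lift{(\alg_1\liftSeq\alg_0)}(\lambda_0\times\lambda_1)}{\lift{(\alg_1\liftSeq\alg_0)}(\lambda'_0\times\lambda'_1)}
\le
\KLdiverge{P}{Q}
{,}
\end{align*}
so it suffices to bound the KL divergence between the two joint laws.

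Then I would apply the standard chain rule for KL divergence to the factorization above:
\begin{align*}
\KLdiverge{P}{Q}
=
\KLdiverge{\lift{\alg_0}(\lambda_0)}{\lift{\alg_0}(\lambda'_0)}
+
\sum_{y_0}\lift{\alg_0}(\lambda_0)[y_0]\,
\KLdiverge{\lift{\alg_1(y_0)}(\lambda_1)}{\lift{\alg_1(y_0)}(\lambda'_1)}
{.}
\end{align*}
The first term is at most $\varepsilon_0$ by the $(\varepsilon_0,\DKL)$-\DistP{} of $\alg_0$ applied to $(\lambda_0,\lambda'_0)\in\varPsi$; for each $y_0$ the divergence inside the sum is at most $\varepsilon_1$ by the $(\varepsilon_1,\DKL)$-\DistP{} of $\alg_1(y_0)$ applied to $(\lambda_1,\lambda'_1)\in\varPsi$; and the weights $\lift{\alg_0}(\lambda_0)[y_0]$ sum to $1$. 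Hence $\KLdiverge{P}{Q}\le\varepsilon_0+\varepsilon_1$, which establishes $(\varepsilon_0+\varepsilon_1,\DKL)$-\DistP{} w.r.t.\ $\varPsi\diamond\varPsi$.

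The main obstacle is the adaptive dependence of $\alg_1$ on the intermediate output $y_0$: the conditional law $\lift{\alg_1(y_0)}(\lambda_1)$ genuinely varies with $y_0$, so naive additivity of divergences is unavailable and the chain-rule identity does the essential work, which is why the hypothesis on $\alg_1(y_0)$ must be invoked uniformly over all $y_0\in\caly_0$. This is exactly the reason the statement is confined to $\DKL$ rather than a general $f$-divergence: the chain-rule equality holds for KL but not for arbitrary $f$-divergences, matching the remark in the text that adaptive compositionality fails for general $f$-divergence.
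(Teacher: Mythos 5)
Your proposal is correct and follows essentially the same route as the paper's proof: factor the joint law of $(y_0,y_1)$ under the product input as marginal times conditional, apply the chain rule for KL divergence, and bound the two resulting terms by $\varepsilon_0$ and $\varepsilon_1$ using the respective hypotheses. You are in fact slightly more careful than the paper at one point --- the paper writes the passage from the $y_1$-marginal divergence to the joint divergence as an equality, whereas your explicit data-processing inequality $\KLdiverge{\lift{(\alg_1\liftSeq\alg_0)}(\lambda_0\times\lambda_1)}{\lift{(\alg_1\liftSeq\alg_0)}(\lambda'_0\times\lambda'_1)}\le\KLdiverge{P}{Q}$ is what the argument actually requires.
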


\begin{proof}
Let 
$(\lambda_0,\lambda'_0), (\lambda_1,\lambda'_1)\in\varPsi$.
\begin{align*}
&\, \KLdiverge{\lift{(\alg_1 \liftSeq \alg_0)}(\lambda_0\times\lambda_1)}{\lift{(\alg_1 \liftSeq \alg_0)}(\lambda'_0\times\lambda'_1)}
\\ =&
\sum_{y_1}
\lift{(\alg_1 \liftSeq \alg_0)}(\lambda_0\times\lambda_1)[y_1]
\ln{\textstyle\frac{ \lift{(\alg_1 \liftSeq \alg_0)}(\lambda_0\times\lambda_1)[y_1] }
        { \lift{(\alg_1 \liftSeq \alg_0)}(\lambda'_0\times\lambda'_1)[y_1] } }
\\ =&
\sum_{y_0,y_1\!}\!%
\lift{\alg_0\!}\!(\!\lambda_0\!)[y_0]
\lift{\alg_1(y_0)\!}\!(\!\lambda_1\!)[y_1]
\ln\!
{\textstyle\frac{\lift{\alg_0\!}\!(\lambda_0)[y_0]\lift{\alg_1(y_0)\!}\!(\lambda_1)[y_1]}
     {\lift{\alg_0\!}\!(\lambda'_0)[y_0]\lift{\alg_1(y_0)\!}\!(\lambda'_1)[y_1]}}
\\ =&
\sum_{y_0}
\lift{\alg_0}(\lambda_0)[y_0]
\ln
{\textstyle \frac{\lift{\alg_0}(\lambda_0)[y_0]}
     {\lift{\alg_0}(\lambda'_0)[y_0]} }
\\[-0.5ex] &
+
\sum_{y_0,y_1}
\lift{\alg_0}(\lambda_0)[y_0]
\lift{\alg_1(y_0)}(\lambda_1)[y_1]
\ln
{\textstyle \frac{\lift{\alg_1(y_0)}(\lambda_1)[y_1]}
     {\lift{\alg_1(y_0)}(\lambda'_1)[y_1]} }
\\ \le&
\KLdiverge{\lift{\alg_0}(\lambda_0)}{\lift{\alg_0}(\lambda'_0)}
\\[-0.5ex]&
+
\max_{y_0}
\sum_{y_1}
\lift{\alg_1}(y_0)(\lambda_1)[y_1]
\ln
{\textstyle \frac{\lift{\alg_1}(y_0)(\lambda_1)[y_1]}
     {\lift{\alg_1}(y_0)(\lambda'_1)[y_1]} }
\\ =&\,
\KLdiverge{\lift{\alg_0}(\lambda_0)}{\lift{\alg_0}(\lambda'_0)}
\\[-0.5ex] &
+
\max_{y_0}\KLdiverge{\lift{\alg_1}(y_0)(\lambda_1)}{\lift{\alg_1}(y_0)(\lambda'_1)}
\\[-0.5ex] \le&\,
\varepsilon_0 + \varepsilon_1
{.}
\end{align*}
Hence $\alg_1 \liftSeq \alg_0$ provides $(\varepsilon_0+\varepsilon_1, \DKL)$-\DistP{} w.r.t. $\varPsi \diamond \varPsi$.
\end{proof}

\begin{restatable}[Sequential composition $\liftSeq$ of $\DKL$-\XDistP{}]{prop}{CompositionKLliftX}\label{prop:Composition:KLliftX}
Let 
$\utmetric$ be a metric over $\calx$, and
$\varPsi \subseteq \Dists\calx\times\Dists\calx$.
If $\alg_0:\calx \rightarrow\Dists\caly_0$ provides $(\varepsilon_0, \sensemd, \DKL)$-\XDistP{} w.r.t. $\varPsi$ 
and for each $y_0\in\caly_0$, $\alg_1(y_0):\calx \rightarrow\Dists\caly_1$ provides $(\varepsilon_1, \sensemd, \DKL)$-\XDistP{} w.r.t. $\varPsi$,
then the composition $\alg_1 \liftSeq \alg_0$ provides $(\varepsilon_0+\varepsilon_1, \sensemd, \DKL)$-\XDistP{} w.r.t. $\varPsi \diamond \varPsi$.
\end{restatable}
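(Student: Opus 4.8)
The plan is to reuse the proof of Proposition~\ref{prop:Composition:KLlift} essentially verbatim for the divergence bookkeeping, and to insert exactly one new step that accounts for the metric factor $\sensemd$. Fix an arbitrary element $(\lambda_0\times\lambda_1,\lambda'_0\times\lambda'_1)\in\varPsi\diamond\varPsi$, so that $(\lambda_0,\lambda'_0),(\lambda_1,\lambda'_1)\in\varPsi$ by definition of $\diamond$. First I would record the factorization $\lift{(\alg_1\liftSeq\alg_0)}(\lambda_0\times\lambda_1)[y_1]=\sum_{y_0}\lift{\alg_0}(\lambda_0)[y_0]\,\lift{\alg_1(y_0)}(\lambda_1)[y_1]$, which holds because the two samples are drawn independently, and then apply the chain rule for KL divergence precisely as in Proposition~\ref{prop:Composition:KLlift} (splitting the log-ratio into a marginal part in $y_0$ and a conditional part in $y_1$, and bounding the $y_0$-average of the conditional term by its maximum). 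This yields
\[
\KLdiverge{\lift{(\alg_1\liftSeq\alg_0)}(\lambda_0\times\lambda_1)}{\lift{(\alg_1\liftSeq\alg_0)}(\lambda'_0\times\lambda'_1)}
\le
\KLdiverge{\lift{\alg_0}(\lambda_0)}{\lift{\alg_0}(\lambda'_0)}
+ \max_{y_0}\KLdiverge{\lift{\alg_1(y_0)}(\lambda_1)}{\lift{\alg_1(y_0)}(\lambda'_1)},
\]
the only difference from the $\DKL$-\DistP{} case being the shape of the hypotheses we are about to invoke.

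Next I would apply the two \XDistP{} assumptions. Since $(\lambda_0,\lambda'_0)\in\varPsi$, the first term is at most $\varepsilon_0\,\sensemd(\lambda_0,\lambda'_0)$; since $(\lambda_1,\lambda'_1)\in\varPsi$ and $\alg_1(y_0)$ is $(\varepsilon_1,\sensemd,\DKL)$-\XDistP{} for every $y_0$, each conditional term is at most $\varepsilon_1\,\sensemd(\lambda_1,\lambda'_1)$, hence so is the maximum. This gives the intermediate bound $\varepsilon_0\,\sensemd(\lambda_0,\lambda'_0)+\varepsilon_1\,\sensemd(\lambda_1,\lambda'_1)$.

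The genuinely new ingredient, and the step I expect to be the main obstacle, is relating this sum of marginal Earth mover's distances to the single distance $\sensemd(\lambda_0\times\lambda_1,\lambda'_0\times\lambda'_1)$ over $\Dists(\calx\times\calx)$ that appears in the conclusion. Here I would first fix the convention that $\sensemd$ on the product space is taken with respect to the additive product metric $\utmetric((x_0,x_1),(x'_0,x'_1))=\utmetric(x_0,x'_0)+\utmetric(x_1,x'_1)$, and then establish the key inequality
\[
\sensemd(\lambda_0,\lambda'_0)+\sensemd(\lambda_1,\lambda'_1)\le \sensemd(\lambda_0\times\lambda_1,\lambda'_0\times\lambda'_1).
\]
The argument is a projection of couplings: any $\gamma\in\cp{\lambda_0\times\lambda_1}{\lambda'_0\times\lambda'_1}$ pushes forward to a coupling of $(\lambda_0,\lambda'_0)$ on the first coordinates and a coupling of $(\lambda_1,\lambda'_1)$ on the second, and under the additive cost its transportation cost splits as the sum of the two projected costs, each of which is at least the corresponding marginal Earth mover's distance; minimizing over $\gamma$ gives the inequality. (In fact equality holds—the reverse inequality follows from the product of two optimal marginal couplings—but only this direction is needed here, so I would not belabor it.)

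Finally I would combine the pieces. Using $\varepsilon_0,\varepsilon_1\in\realsnng$ and nonnegativity of the distances, $\varepsilon_0\,\sensemd(\lambda_0,\lambda'_0)+\varepsilon_1\,\sensemd(\lambda_1,\lambda'_1)\le(\varepsilon_0+\varepsilon_1)\bigl(\sensemd(\lambda_0,\lambda'_0)+\sensemd(\lambda_1,\lambda'_1)\bigr)$, and the key inequality then bounds this by $(\varepsilon_0+\varepsilon_1)\,\sensemd(\lambda_0\times\lambda_1,\lambda'_0\times\lambda'_1)$. Since the chosen element of $\varPsi\diamond\varPsi$ was arbitrary, this is exactly the $\DKL$-\XDistP{} bound for $\alg_1\liftSeq\alg_0$ w.r.t.\ $\varPsi\diamond\varPsi$, completing the proof.
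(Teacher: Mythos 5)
Your proposal is correct and follows the same route as the paper, whose entire proof of this proposition is the single line ``Analogous to the proof for Proposition~\ref{prop:Composition:KLlift}'': you reuse the KL chain-rule decomposition verbatim and then invoke the two \XDistP{} hypotheses. The one genuinely new step you identify --- fixing the additive product metric on $\calx\times\calx$ and proving $\sensemd(\lambda_0,\lambda'_0)+\sensemd(\lambda_1,\lambda'_1)\le \sensemd(\lambda_0\times\lambda_1,\lambda'_0\times\lambda'_1)$ by projecting couplings --- is exactly the detail the paper's ``analogous'' sweeps under the rug (the statement never even says which metric on the product space $\sensemd$ is taken with respect to), and your treatment of it, including the crude but sufficient bound $\varepsilon_0 d_0+\varepsilon_1 d_1\le(\varepsilon_0+\varepsilon_1)(d_0+d_1)$, is sound.
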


\begin{proof}
Analogous to the proof 
for Proposition~\ref{prop:Composition:KLlift}.
\end{proof}

\subsection{Post-processing and Pre-processing}
\label{sub:post-pre-processing}

Next we show that divergence distribution privacy is immune to the post-processing.
For $\alg_0:\calx\rightarrow\Dists\caly$ and $\alg_1:\caly\rightarrow\Dists\calz$, we define $\alg_1 \circ \alg_0$ by: $(\alg_1 \circ \alg_0)(x) = \alg_1(\alg_0(x))$.

\begin{restatable}[Post-processing]{prop}{PostProcess}\label{prop:PostProcess}
Let 
$\varPsi\subseteq\Dists\calx\times\Dists\calx$, and $\sensfunc:\Dists\calx\times\Dists\calx\rightarrow\realsnng$ be a metric.
Let $\alg_0:\calx\rightarrow\Dists\caly$ and $\alg_1:\caly\rightarrow\Dists\calz$.
\begin{enumerate}
\item
If $\alg_0$ provides $(\varepsilon, \Df)$-\DistP{} w.r.t. $\varPsi$ then so does the composite function $\alg_1\circ\alg_0$.
\item
If $\alg_0$ provides $(\varepsilon, \sensfunc, \Df)$-\XDistP{} w.r.t. $\varPsi$ then so does the composite function $\alg_1\circ\alg_0$.
\end{enumerate}
\end{restatable}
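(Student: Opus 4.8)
The plan is to reduce both claims to the data-processing inequality for $f$-divergences, since post-processing by $\alg_1$ acts on the output distributions as the stochastic channel $\lift{\alg_1}$. First I would record the functoriality of lifting, namely $\lift{(\alg_1\circ\alg_0)} = \lift{\alg_1}\circ\lift{\alg_0}$. This follows by unfolding the definition of lifting: for any $R\subseteq\calz$ and $\lambda\in\Dists\calx$, both sides evaluate to $\sum_{x,y}\lambda[x]\,\alg_0(x)[y]\,\alg_1(y)[R]$, so they agree after exchanging the order of summation.

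The heart of the argument is the claim that for all $\mu,\mu'\in\Dists\caly$ we have $\fdiverge{\lift{\alg_1}(\mu)}{\lift{\alg_1}(\mu')} \le \fdiverge{\mu}{\mu'}$. I would prove this by exactly the Jensen's-inequality step already used in the proof of Theorem~\ref{thm:fDistP}: writing the ratio $\frac{\lift{\alg_1}(\mu)[z]}{\lift{\alg_1}(\mu')[z]}$ as a convex combination of the ratios $\frac{\mu[y]}{\mu'[y]}$ with weights $\frac{\mu'[y]\,\alg_1(y)[z]}{\sum_{y'}\mu'[y']\,\alg_1(y')[z]}$ (which sum to $1$ over $y$), then applying Jensen's inequality with the convexity of $f$, and finally collapsing the resulting double sum over $z$ using $\sum_z \alg_1(y)[z] = 1$. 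What remains is precisely $\sum_y \mu'[y]\,f(\mu[y]/\mu'[y]) = \fdiverge{\mu}{\mu'}$.

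With these two facts in hand, both claims are immediate. For claim~(1), fix $(\lambda,\lambda')\in\varPsi$ and set $\mu = \lift{\alg_0}(\lambda)$ and $\mu' = \lift{\alg_0}(\lambda')$. Then functoriality followed by the data-processing inequality gives $\fdiverge{\lift{(\alg_1\circ\alg_0)}(\lambda)}{\lift{(\alg_1\circ\alg_0)}(\lambda')} = \fdiverge{\lift{\alg_1}(\mu)}{\lift{\alg_1}(\mu')} \le \fdiverge{\mu}{\mu'} \le \varepsilon$, where the last inequality is the $(\varepsilon,\Df)$-\DistP{} of $\alg_0$. For claim~(2) the identical chain applies, except that the final bound supplied by $\alg_0$ is $\varepsilon\,\sensfunc(\lambda,\lambda')$ rather than $\varepsilon$; the data-processing step leaves this untouched, since the metric term $\sensfunc(\lambda,\lambda')$ is a function of the input distributions alone and does not involve the mechanisms at all.

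The only non-routine ingredient is the data-processing inequality, and even that is essentially the convexity computation already performed for Theorem~\ref{thm:fDistP}, so I expect no genuine obstacle. The one place to be careful is the \XDistP{} case: I must confirm that the right-hand bound really factors as $\varepsilon\cdot\sensfunc(\lambda,\lambda')$ with the metric factor depending only on $(\lambda,\lambda')$, so that it survives the data-processing step unchanged.
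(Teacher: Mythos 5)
Your proposal is correct and follows the same route as the paper, which simply observes that the claim is immediate from the data-processing inequality for $f$-divergences. You additionally spell out the functoriality of lifting and a Jensen's-inequality proof of the data-processing inequality itself (mirroring the computation in Theorem~\ref{thm:fDistP}), which the paper leaves implicit, but this is elaboration rather than a different argument.
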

\vspace{0.5ex}

\begin{proof}
The claim is immediate from the data processing inequality for the $f$-divergence.
\end{proof}
\vspace{0.5ex}

We then show properties of pre-processing as follows.
\begin{restatable}[Pre-processing]{prop}{PreProcessMAX}\label{prop:PreProcessMAX}
Let $c\in\realsnng$, 
$\varPsi \subseteq \Dists\calx\times\Dists\calx$,
$\sensfunc:\Dists\calx\times\Dists\calx\rightarrow\realsnng$ be a metric, and $D\in\Div{\caly}$.
\begin{enumerate}
\item
If $T:\Dists\calx\rightarrow\Dists\calx$ is a \emph{$(c,\varPsi)$-stable} transformation and 
$\alg:\calx\rightarrow\Dists\caly$ provides $(\varepsilon,D)$-\DistP{} w.r.t. $\varPsi$, then $\alg\circ T$ provides $(c\,\varepsilon, D)$-\DistP{} w.r.t. $\varPsi$.
\item
If $T:\Dists\calx\rightarrow\Dists\calx$ is a \emph{$(c,\sensfunc)$-stable} transformation and 
$\alg:\calx\rightarrow\Dists\caly$ provides $(\varepsilon,\sensfunc, D)$-\XDistP{}, then $\alg\circ T$ provides $(c\,\varepsilon, \sensfunc, D)$-\XDistP{}.
\end{enumerate}
\end{restatable}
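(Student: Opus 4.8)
The plan is to unfold both statements through Definitions~\ref{def:sDistP} and~\ref{def:sXDP-dist}, which reduce each claim to a divergence bound between liftings, and then to exploit the two stability notions of Definition~\ref{def:c-stable}. Throughout I read the pre-composition $\alg\circ T$ through its lifting, i.e.\ $\lift{(\alg\circ T)}=\lift{\alg}\circ T$, so that $\lift{(\alg\circ T)}(\lambda)=\lift{\alg}(T(\lambda))$ for every $\lambda$; the task in both parts is therefore to bound $\diverge{\lift{\alg}(T(\lambda_0))}{\lift{\alg}(T(\lambda_1))}$ for the relevant pairs $(\lambda_0,\lambda_1)$.

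I would dispatch the metric case (part~2) first, as it is the clean one and fully general in $D$. Fix any $\lambda_0,\lambda_1\in\Dists\calx$. By $(c,\sensfunc)$-stability, $\sensfunc(T(\lambda_0),T(\lambda_1))\le c\,\sensfunc(\lambda_0,\lambda_1)$, and since $\alg$ provides $(\varepsilon,\sensfunc,D)$-\XDistP{} (w.r.t.\ all pairs), applying the defining inequality once at the pair $(T(\lambda_0),T(\lambda_1))$ yields
\[
\diverge{\lift{\alg}(T(\lambda_0))}{\lift{\alg}(T(\lambda_1))}
\le \varepsilon\,\sensfunc(T(\lambda_0),T(\lambda_1))
\le c\varepsilon\,\sensfunc(\lambda_0,\lambda_1),
\]
which is exactly $(c\varepsilon,\sensfunc,D)$-\XDistP{}. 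No property of $D$ beyond being a divergence is used, because stability is invoked a single time and merely rescales the metric argument of the one \XDistP{} inequality.

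For the relation case (part~1) I would fix $(\lambda_0,\lambda_1)\in\varPsi$ and use $(c,\varPsi)$-stability to obtain a chain $T(\lambda_1)=\nu_0,\nu_1,\dots,\nu_k=T(\lambda_0)$ with $k\le c$ whose consecutive pairs are adjacent in $\varPsi$; then $(\varepsilon,D)$-\DistP{} of $\alg$ bounds each per-step divergence $\diverge{\lift{\alg}(\nu_{i+1})}{\lift{\alg}(\nu_i)}$ by $\varepsilon$. The remaining step, accumulating these $k$ single-step bounds into a bound of $c\varepsilon$ on the endpoint divergence, is the main obstacle: a general divergence need not satisfy a triangle inequality, so the telescoping is not automatic. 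I expect to close it by invoking a triangle-type (group-privacy) property of $D$. For the approximate max divergence this is immediate, since by Proposition~\ref{prop:max-div-DP} $(\varepsilon,\Dinf^{\delta})$-\DistP{} is just $(\varepsilon,\delta)$-\DP{} of the lifting, and standard \DP{} group privacy chains $k$ steps into the factor $e^{k\varepsilon}\le e^{c\varepsilon}$ (with the corresponding $\delta$-accumulation); the same telescoping applies to any $D$ admitting a (weak) triangle inequality, e.g.\ total variation. For $c=1$ the chain is a single adjacent pair and the conclusion is immediate for every divergence. I would therefore flag explicitly that for divergences lacking a triangle inequality (such as $\DKL$) the multi-step case is where genuine care is needed, so the unconditional instance of part~1 is the single-step ($c=1$) reduction.
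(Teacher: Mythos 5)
Your part~2 argument is exactly the paper's: apply the \XDistP{} inequality once at the pair $(\lift{T}(\lambda),\lift{T}(\lambda'))$ and rescale by $(c,\sensfunc)$-stability; nothing more is needed, and no property of $D$ beyond being a divergence is used. For part~1 you have in fact been more careful than the paper. The paper's proof is a single line asserting
$\diverge{\lift{\alg}(\lift{T}(\lambda))}{\lift{\alg}(\lift{T}(\lambda'))}\le c\,\varepsilon$ ``by $(c,\varPsi)$-stability,'' which, once the definition of $(c,\varPsi)$-stability is unfolded, is precisely the chain-and-telescope argument you describe: a path of at most $c$ adjacent steps in $\varPsi$, each contributing $\varepsilon$, summed along the path. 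As you correctly point out, that summation silently uses a triangle-type inequality for $D$, and the paper's own definition of divergence explicitly warns that a divergence ``may not be symmetric or subadditive''; so for general $D$ (e.g.\ $\DKL$) the accumulation step is not automatic, and the unconditional content of part~1 is the $c=1$ case, with the multi-step case valid for divergences admitting group-privacy-style composition such as $\Dinf^{\delta}$ or total variation. In short, your proposal follows the paper's route, closes part~2 completely, and the reservation you flag on part~1 is a real gap in the proposition as stated for arbitrary $D$ rather than a defect of your argument; making the triangle-inequality hypothesis on $D$ explicit (or restricting to $c=1$) is the honest way to state the result.
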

\vspace{0.5ex}

\begin{proof}
We show the first claim as follows.
Assume that $\alg$ provides $(\varepsilon,D)$-\DistP{} w.r.t. $\varPsi$.
Let $(\lambda,\lambda')\in\varPsi$.
Then
$\diverge{\lift{(\alg\circ T)}(\lambda)\!}{\!\lift{(\alg\circ T)}(\lambda')} =
\diverge{\lift{\alg}(\lift{T}(\lambda))\!}{\!\lift{\alg}(\lift{T}(\lambda'))} 
\le c \varepsilon$
by $(c,\varPsi)$-stability.
Therefore $\alg\circ T$ provides $(c\,\varepsilon, D)$-\DistP{} w.r.t. $\varPsi$.

Next we show the second claim.
Assume that $\alg$ provides $(\varepsilon,\sensfunc,D)$-\XDistP{}.
Let $\lambda,\lambda'\in\Dists\calx$.
Then we obtain:
\begin{align*}
&\eqspace
\diverge{\lift{(\alg\circ T)}(\lambda)\!}{\!\lift{(\alg\circ T)}(\lambda')}
\\ &=
\diverge{\lift{\alg}(\lift{T}(\lambda))\!}{\!\lift{\alg}(\lift{T}(\lambda'))}
\\ &\le
\varepsilon\sensfunc(\lift{T}(\lambda),\lift{T}(\lambda'))
\\ &\le
c\,\varepsilon\sensfunc(\lambda,\lambda')
\hspace{3ex}\text{(by $(c,\sensfunc)$-stable)}
{.}
\end{align*}
Therefore $\alg\circ T$ provides $(c\,\varepsilon, \sensfunc, D)$-\XDistP{}.
\end{proof}

\subsection{Relationships among \XDistP{} Notions}
\label{subsec:relationships:notions}

Finally, we show relationships among distribution privacy notions with different metric $d$ and divergence $D$.
\vspace{0.5ex}

\begin{restatable}[$\sensemd$-\XDistP{} \,$\Rightarrow$ $\sensinf$-\XDistP{}]{prop}{relationWDistP}
\label{prop:relationWDistP}
Let 
$D\in\Div{\caly}$.
If $\alg:\calx\rightarrow\Dists\caly$
provides $(\varepsilon, \sensemd, \mathit{D})$-\XDistP{}, then it provides $(\varepsilon, \sensinf, \mathit{D})$-\XDistP{}.
\end{restatable}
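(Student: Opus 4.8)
The plan is to reduce the claim to a pointwise comparison between the two Wasserstein metrics. Since the \XDistP{} condition bounds the divergence \emph{linearly} in the metric, with the same left-hand side $\diverge{\lift{\alg}(\lambda)}{\lift{\alg}(\lambda')}$ in both cases, it suffices to show that $\sensemd$ is everywhere dominated by $\sensinf$: a pointwise-larger metric yields a weaker (larger) upper bound, so the $\sensemd$-bound is automatically inherited by $\sensinf$. The only substantive ingredient is thus the monotonicity of the $p$-Wasserstein metrics in $p$, applied at $p=1$ versus $p=\infty$.

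First I would establish that for all $\lambda, \lambda'\in\Dists\calx$, $\sensemd(\lambda, \lambda') \le \sensinf(\lambda, \lambda')$. To prove this, let $\gamma^*\in\GammaInf(\lambda, \lambda')$ be a coupling achieving the $\infty$-Wasserstein distance, so that $\sensinf(\lambda, \lambda') = \max_{(x_0, x_1)\in\supp(\gamma^*)} \utmetric(x_0, x_1)$. Because $\gamma^*$ is itself a coupling of $\lambda$ and $\lambda'$, and because $\sensemd$ is defined as a \emph{minimum} over all couplings, the Earth mover's distance is bounded above by the transportation cost of this single coupling:
\begin{align*}
\sensemd(\lambda, \lambda')
&\le
\sum_{(x_0, x_1)\in\supp(\gamma^*)} \utmetric(x_0, x_1)\, \gamma^*[x_0, x_1]
\\
&\le
\Bigl( \max_{(x_0, x_1)\in\supp(\gamma^*)} \utmetric(x_0, x_1) \Bigr)
\sum_{(x_0, x_1)\in\supp(\gamma^*)} \gamma^*[x_0, x_1]
\;=\;
\sensinf(\lambda, \lambda')
{,}
\end{align*}
where the final equality uses that $\gamma^*$ is a probability distribution. (This is simply the monotonicity of $L^p$ norms in $p$ under a probability measure.)

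Given this inequality, the conclusion is immediate. Assuming $\alg$ provides $(\varepsilon, \sensemd, D)$-\XDistP{} w.r.t. an adjacency relation $\varPsi$, for every $(\lambda, \lambda')\in\varPsi$ we have
\[
\diverge{\lift{\alg}(\lambda)}{\lift{\alg}(\lambda')}
\;\le\; \varepsilon\, \sensemd(\lambda, \lambda')
\;\le\; \varepsilon\, \sensinf(\lambda, \lambda')
{,}
\]
which is precisely the defining condition of $(\varepsilon, \sensinf, D)$-\XDistP{}. Hence $\alg$ provides $(\varepsilon, \sensinf, D)$-\XDistP{} w.r.t. $\varPsi$.

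There is essentially no hard obstacle here; the argument is elementary once the Wasserstein monotonicity is recalled. The one point requiring slight care is to derive $\sensemd \le \sensinf$ by evaluating the minimum defining $\sensemd$ at a \emph{specific} optimal $\infty$-Wasserstein coupling $\gamma^*$, rather than attempting to compare the two optimal couplings directly — the minimizers for $p=1$ and $p=\infty$ need not coincide, so one must use that any coupling upper-bounds the $W_1$-minimum.
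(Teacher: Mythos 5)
Your proof is correct and follows the same route as the paper: reduce the claim to the pointwise inequality $\sensemd \le \sensinf$ and then chain the bounds. The only difference is that the paper simply cites this monotonicity as a known property of the $p$-Wasserstein metrics, whereas you prove it by evaluating the $W_1$ minimum at an optimal $W_\infty$ coupling, which is a correct and welcome addition of detail.
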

\vspace{0.5ex}

\begin{proof}
Assume that $\alg$ provides $(\varepsilon, \sensemd, \mathit{D})$-\XDistP{}.
Let $\lambda_0,\lambda_1\in\Dists\calx$.
By the property of the $p$-Wasserstein metric,
$\sensemd(\lambda_0,\lambda_1) \le \sensinf(\lambda_0,\lambda_1)$.
Then
$\mathit{D}(\mu_0\parallel\mu_1) \le
 \sensemd(\lambda_0,\lambda_1) \le
 \sensinf(\lambda_0,\lambda_1)$.
Hence the claim follows.
\end{proof}
\vspace{0.5ex}

\begin{restatable}[$\mathit{D}\le\mathit{D'}$ \& $\mathit{D'}$-\XDistP{} \,$\Rightarrow$ $\mathit{D}$-\XDistP{}]{prop}{relationDDistP}
\label{prop:relationDDistP}
Let 
$d: (\Dists\calx \times \Dists\calx)\rightarrow\reals$ be a metric.
Let $D, D'\in\Div{\caly}$ be two divergences such that for all $\mu_0, \mu_1\in\Dists\caly$,\,
$\diverge{\mu_0}{\mu_1} \le \mathit{D'}(\mu_0\parallel\mu_1)$.
If $\alg:\calx\rightarrow\Dists\caly$ provides
$(\varepsilon, d, \mathit{D'})$-\XDistP{}, then it provides $(\varepsilon, d, \mathit{D})$-\XDistP{}.
\end{restatable}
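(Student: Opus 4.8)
The plan is to observe that, by Definition~\ref{def:sXDP-dist}, the conclusion $(\varepsilon, d, D)$-\XDistP{} is nothing more than the single inequality $\diverge{\lift{\alg}(\lambda)}{\lift{\alg}(\lambda')} \le \varepsilon\, d(\lambda, \lambda')$ holding for every adjacent pair $(\lambda, \lambda')\in\varPsi$, and that the analogous inequality with $\mathit{D'}$ in place of $\mathit{D}$ is exactly what the hypothesis $(\varepsilon, d, \mathit{D'})$-\XDistP{} supplies. Since we are additionally given the pointwise domination $\diverge{\mu_0}{\mu_1} \le \mathit{D'}(\mu_0 \parallel \mu_1)$ for all $\mu_0, \mu_1 \in \Dists\caly$, the proof reduces to chaining these two facts through one instantiation, with no appeal to convexity, coupling, or the data-processing inequality.

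Concretely, I would first unfold the target condition: fix an arbitrary pair $(\lambda, \lambda')\in\varPsi$ and set $\mu_0 \eqdef \lift{\alg}(\lambda)$ and $\mu_1 \eqdef \lift{\alg}(\lambda')$, which are elements of $\Dists\caly$. Next I would apply the domination hypothesis to these particular output distributions, obtaining $\diverge{\mu_0}{\mu_1} \le \mathit{D'}(\mu_0 \parallel \mu_1)$. Finally, I would invoke the assumed $(\varepsilon, d, \mathit{D'})$-\XDistP{} of $\alg$, which gives $\mathit{D'}(\mu_0 \parallel \mu_1) \le \varepsilon\, d(\lambda, \lambda')$. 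Composing the two inequalities yields $\diverge{\lift{\alg}(\lambda)}{\lift{\alg}(\lambda')} \le \varepsilon\, d(\lambda, \lambda')$, which is precisely the $(\varepsilon, d, \mathit{D})$-\XDistP{} requirement for the chosen pair; since the pair was arbitrary, the claim follows for all of $\varPsi$.

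I expect essentially no obstacle: the argument is a one-line transitivity of inequalities, structurally mirroring the proof of Proposition~\ref{prop:relationWDistP}, which slides along the metric side ($\sensemd \le \sensinf$) rather than the divergence side. The only point worth verifying is that the domination hypothesis is quantified over \emph{all} pairs in $\Dists\caly$, so that it legitimately applies to the specific output distributions $\lift{\alg}(\lambda)$ and $\lift{\alg}(\lambda')$ and not merely to some restricted family; this is guaranteed by the statement as written.
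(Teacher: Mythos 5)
Your proposal is correct and follows essentially the same route as the paper's own proof: instantiate the pointwise domination $D \le D'$ at the output distributions $\lift{\alg}(\lambda)$ and $\lift{\alg}(\lambda')$, then chain it with the $(\varepsilon, d, D')$-\XDistP{} bound to get $\varepsilon\, d(\lambda,\lambda')$. No gaps.
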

Then $(\varepsilon, d, \Dinf)$-\XDistP{} implies $(\varepsilon, d, \DKL)$-\XDistP{}.
\vspace{0.5ex}

\begin{proof}
Assume $\alg$ provides $(\varepsilon, d, \mathit{D'})$-\XDistP{}.
Let $\lambda_0, \lambda_1\in\Dists\calx$.
Then
$
\mathit{D'}(\lift{\alg}(\lambda_0){\parallel}\lift{\alg}(\lambda_1))
\le \varepsilon d(\lambda_0,\lambda_1)
{.}
$
By definition,
$\mathit{D}(\lift{\alg}(\lambda_0){\parallel}\lift{\alg}(\lambda_1)) \le
 \mathit{D'}(\lift{\alg}(\lambda_0){\parallel}\lift{\alg}(\lambda_1))\allowbreak \le
 \varepsilon d(\lambda_0,\lambda_1)$.
Thus $\alg$ provides $(\varepsilon, d, \mathit{D})$-\XDistP{}.
\end{proof}

}

\end{document}